\newtheorem{Assumption}{Assumption}
\newtheorem{lemma}{Lemma}
\newtheorem{theorem}{Theorem}
\newtheorem{definition}{Definition}
\newtheorem{corollary}{Corollary}
\begin{document}
%
\title{
Efficient ADMM Decoder for Non-binary LDPC
Codes with Codeword-Independent Performance}
%
%
%

\author{\IEEEauthorblockN{Xiaomeng Guo, \ \ Yongchao Wang}\\
\IEEEauthorblockA{\textit{State Key Laboratory of Intergrated Services Networks} \\
\textit{Xidian University, Xi'an, China}\\
xmguo@stu.xidian.edu.cn, ychwang@mail.xidian.edu.cn}
}

\markboth{Journal of \LaTeX\ Class Files,~Vol.~14, No.~8, August~2015}%
{Shell \MakeLowercase{\textit{et al.}}:
Improved ADMM Decoder for non-binary LDPC codes in $\mathbb{F}_{2^q}$}

\maketitle

\begin{abstract}

In this paper, we devote to devise a non-binary low-density parity-check (LDPC) decoder in Galois fields of characteristic two ($\mathbb{F}_{2^q}$) via the alternating direction method of multipliers (ADMM) technique.
Through the proposed bit embedding technique and the decomposition technique of the three-variables parity-check equation, an efficient ADMM decoding algorithm for non-binary LDPC codes is proposed.
The computation complexity in each ADMM iteration is roughly $\mathcal{O}(nq)$, which is significantly lower than the existing LDPC decoders.
Moreover, we prove that the proposed decoder satisfies the favorable property of the codeword-independent.
Simulation results demonstrate the outstanding performance of the proposed decoder in contrast with state-of-the-art LDPC decoders.
\end{abstract}

\begin{IEEEkeywords}
Non-binary Low-density Parity-check (LDPC) codes, Galois Fields of Characteristic Two, Bit Embedding, Alternating Direction Method of Multipliers (ADMM), Quadratic Programming (QP).
\end{IEEEkeywords}

\IEEEpeerreviewmaketitle

\section{Introduction}
\IEEEPARstart{S}{ince} the high-data-rate requirement in the 5G \cite{5G-survey} wireless communication system, non-binary low-density parity-check (LDPC) codes \cite{Davey-nonBP} in Galois fields of characteristic 2 ($\mathbb{F}_{2^q}$) are more favorable than binary codes in bandwidth-efficient schemes when integrated with higher order modulation \cite{combined-highmodu}.
Moreover, non-binary LDPC codes are competent to eliminate short cycles \cite{better-performance} and resist burst errors \cite{burst-error}, hence they can achieve superior error correction performance than binary codes.

Recently, linear programming (LP) is applied to decode binary LDPC codes by Feldman \emph{et al}. \cite{FeldmanLP} for the first time. In \cite{FeldmanLP}, the objective is designed as the same to maximum likelihood (ML) decoding problem.
 Since then, many researchers pay a lot of attention to the LP decoding techniques and the analyzable decoding guarantees, i.e., convergence and codeword-independent
property.
In contrast with the classical belief propagation (BP) decoders, the general LP solving algorithms still face a difficult challenge, computational complexity, especially when general solvers, such as interior point \cite{interior-point} method or simplex method \cite{revised-simplex}, is used.
Then many works aiming to reduce the decoding complexity have emerged. In \cite{Barman-ADMM}, Barman \emph{et al}. firstly introduced the alternating direction method of multipliers (ADMM) \cite{ADMM} to the LP decoding problem \cite{FeldmanLP}.
The complexity of this ADMM-based decoder is reduced sharply and comparable to the BP decoder, however the time-consuming projection operations onto parity polytopes in each iteration are still included.
In \cite{Bai-wcl}, Bai \emph{et al}. utilized the three-variables parity-check equations decomposition technique to construct an efficient ADMM-based decoder.
 Besides the efforts on reducing the complexity, many works focus on the error correction performance in low SNR regions.
 Taagavi and Siegel in \cite{adaptive-LP} employed requisite parity-check constraints adaptively to the proposed LP decoder to improve the decoding performance.
 To eliminate the undesired pseudo-codewords, a series of cut-generating algorithms \cite{cut-plane-algorithm}-\cite{Adaptive-cut} were proposed to improved the error correction performance of the LP decoders.
 Moreover, Liu \emph{et al}. in \cite{penalty-decoder} and Bai \emph{et al}. in \cite{bai-admm-qp-binary} constructed an ADMM-based variable node penalized decoder, Wei \emph{et al}. in \cite{CheckNodePenalized} proposed an ADMM-based check node penalized decoder.
 These decoders all added a non-convex term to the LP decoding objective, which promoted the error correction performance of LP decoding in low SNR region and did not suffer from the error-floor like BP decoders.

 LP techniques were first introduced to non-binary LDPC codes by Flanagan in \cite{Flanagan}.
 As the same challenge the binary LP decoding encountered when used general LP solvers, the non-binary LP decoding also faces the complexity problem.
 To resolve this problem, a coordinate ascent method was applied in \cite{ILP-nonbinary} and \cite{LCLP-nonbinary} to solve the dual problem of the original LP problem \cite{ILP-binary} and \cite{LCLP-binary}.
 In \cite{Rosnes-ALP}, Rosnes proposed adaptive linear programming decoding of linear codes over prime fields and dynamic programming was used in the separation algorithm of the decoding polytopes described by the resulting inequalities.
 Besides, Honda \emph{et al}. in \cite{Fast-LP-nonbinary} proposed a constant-weight embedding which used a binary vector to represent the element in $\mathbb{F}_{2^q}$, however no further efficient algorithm was presented to tackle the resulting LP problem.
 Furthermore, Liu and Draper in \cite{Liu-nonbinary-journal} proposed  (non)penalized LP decoding algorithms for non-binary LDPC codes in $\mathbb{F}_{2^q}$ based on ADMM technique.
 Especially, the parity-check constraints are equivalently transformed to a series of binary constraints through a factor graph representation.
 Then the ADMM algorithm was applied to solve the resulting decoding problem.
 However, the time-consuming Euclidean projections onto high dimensional simplex polytopes are still required in each iteration of the proposed ADMM decoder.
 Lately, Wang and Bai in \cite{wang-nonbinary} proposed a proximal-ADMM decoder for non-binary LDPC codes.
 The three variables parity-check equations decomposition technique was implemented together with the same equivalent transform of the constraints in \cite{Liu-nonbinary-journal}.
 Then the proximal-ADMM algorithm was used to generate a block-parallel updating scheme.
 To date, there are two techniques, named as Flanagan embedding and Constant-Weight embedding \cite{Flanagan}\cite{Liu-nonbinary-journal}, which can map element in $\mathbb{F}_{2^q}$ one-to-one correspondent to a length $2^q-1$ or length $2^q$ binary vector in real space. Either results in computational complexity of the corresponding decoding algorithms  is proportional to $2^q$, i.e., the size of the considered Galois fields \cite{Liu-nonbinary-journal}\cite{wang-nonbinary}.

  In this paper, our main concern is devising a novel non-binary LDPC decoder with competitive error correction performance and lower decoding complexity, meanwhile the codeword-independent performance has to be theoretically-guaranteed. The main contents of this paper are summarized as follows:
 \begin{itemize}
  \item Different from the embedding techniques in \cite{Flanagan} and  \cite{Fast-LP-nonbinary}, we propose a new embedding method, named bit embedding, to represent the elements in finite fields by a binary vector of dimension $q$ in real space, under which the decoding complexity grows linearly instead of grows exponentially (\cite{Liu-nonbinary-journal}\cite{wang-nonbinary}) as the considered length $q$ grows.
  We also introduce a equivalent transform of the non-binary parity-check constraints to a series of binary parity-check constraints under the bit embedding.
  \item Based on the three variables parity-check equations decomposition technique and its equivalent binary inequalities formulation, we decompose the multi-variables binary parity-check constraints and transform the non-binary ML decoding problem to an equivalent linear integer program.
  Then we relax the binary constraints to box constraints and add a quadratic penalty term to the objective, a new quadratic programming (QP) decoding model is built for non-binary LDPC codes in $\mathbb{F}_{2^q}$.
  \item We exploit the ADMM algorithm to solve the resulting QP decoding model.
   Utilizing the special intrinsic structure of the QP problem, all the variables in one ADMM iteration can be updated in a full-parallel pattern.
  \item The performance of the proposed ADMM decoding algorithm satisfies the favorable property of the codeword-independent. Moreover, its complexity in each iteration is roughly $\mathcal{O}(nq)$.
  \item We show through numerical simulations that the proposed ADMM decoder outperforms state-of-the-art decoders in error correction performance and complexity.
  \end{itemize}
  To facilitate reading of this paper, notations are explained explicitly in Table \ref{notation-table}. The operator ``$\otimes$'' are specifically defined as follows.
      \begin{itemize}
        \item For matrices $\mathbf{A} \in \mathbb{R}^{m\times n} $ and $\mathbf{B} \in \mathbb{R}^{p\times q}$, $\mathbf{A} \otimes \mathbf{B}=\begin{bmatrix}
    a_{11}\mathbf{B} & \cdots & a_{1n}\mathbf{B} \\
       \vdots        & \ddots & \vdots \\
    a_{m1}\mathbf{B} & \cdots & a_{mn}\mathbf{B}
  \end{bmatrix}
  \in \mathbb{R}^{mp \times nq}$.
   \item For column vectors $\mathbf{a}=[a_1,\ldots,a_m]^T$ and $\mathbf{b}=(b_1,\ldots,b_n)^T$, $\mathbf{a} \otimes \mathbf{b}=[a_1\mathbf{b};\ldots;a_m\mathbf{b}]\in \mathbb{R}^{mn}$.
  \item For column vector $\mathbf{a}=[a_1,\ldots,a_m]^T$ and matrix $\mathbf{B} \in \mathbb{R}^{p\times q}$, $\mathbf{a} \otimes \mathbf{B}=[a_1\mathbf{B},\ldots,a_m\mathbf{B}]\in \mathbb{R}^{mp\times n}$.
  \end{itemize}
  \begin{table}[t]
\caption{Notations and Descriptions.}
\label{notation-table}
\renewcommand{\arraystretch}{1.3}
\begin{center}
\begin{tabular}{|c|l|}
\hline
\hline
\textbf{Notations}  &  \hspace{3cm} \textbf{Descriptions}                                     \\\hline
$\mathbb{F}_{2^q}$  & Galois field of characteristic two                    \\\hline
$\mathbb{R}$        & The set of real numbers                                            \\\hline
$\mathbf{A}$        & Matrix                                            \\\hline
$\mathbf{a}$        & Column vector
 \\\hline
$a$                 & Scalar
 \\\hline
 $\{0,1\}^a$        & a-length binary column vector
 \\\hline
 \hspace{0.2cm}$\{0,1\}^{a\times b}$         & a-by-b binary matrix
 \\\hline
$[\mathbf{a}; \mathbf{b}]$ or $[\mathbf{A}; \mathbf{B}]$             & Vectors or matrices are concatenated in column-wise                 \\\hline
$[\mathbf{a}, \mathbf{b}]$ or $[\mathbf{A},\mathbf{B}]$             & Vectors or matrices are concatenated in row-wise                      \\\hline
$\mathbf{1}_{a}$ or $\mathbf{1}_{a\times b}$             & Length-$a$ all-ones vector or $a$-by-$b$ all-ones matrix                   \\\hline
$\mathbf{0}_{a}$ or $\mathbf{0}_{a\times b}$             & Length-$a$ all-zeros vector or $a$-by-$b$ all-zeros matrix                      \\\hline
$\mathbf{I}_{a}$             & $a\times a$ identity matrix                   \\\hline
$(\cdot)^T$         & Transpose operator                                    \\\hline
$\|\cdot\|_{2}$     &  $\ell_2$-norm                                  \\\hline
$\preceq$           & Generalized inequality          \\\hline
$\otimes$           & Kronecker product                                     \\\hline
$\delta_{\mathbf{A}}$             & Spectral norm of matrix $\mathbf{A}$                      \\\hline
$\lambda_{\min}(\mathbf{A}^T\!\mathbf{A})$    & Minimum eigenvalue of matrix $\mathbf{A}^T\mathbf{A}$                                 \\\hline
$\underset{\mathcal{X}}\Pi$    & Euclidean projection onto  set $\mathcal{X}$                                 \\\hline
\hline
\end{tabular}
\end{center}
\end{table}

The rest of this paper is organized as follows.
In Section \ref{ML-QP}, we transform the non-binary ML decoding problem to an equivalent linear integer program based on the decomposition technique and the proposed bit embedding technique.
In Section \ref{ADMM-algorithm}, we establish a relaxed QP decoding problem for non-binary LDPC codes. Then an efficient ADMM decoding algorithm is proposed to solve the resulting relaxed QP problem.
Section \ref{performance-analysis} presents the codeword-independent and the complexity analysis of the proposed ADMM decoding algorithm.
Simulation results show the effectiveness of our proposed ADMM decoder in Section \ref{simulation}.
Section \ref{conclusion} concludes this paper.

\section{ML Decoding problem formulation and its equivalent lineal integer program}\label{ML-QP}
Consider a non-binary LDPC codeword defined by an $m$-by-$n$ parity-check matrix $\mathbf{H}$ in $\mathbb{F}_{2^q}$. Its feasible codeword set $ \mathcal{C} $ can be denoted by
\begin{equation}\label{feasible_set}
\mathcal{C}=\{\mathbf{c}|\mathbf{h}^T_j\mathbf{c}=0,j\in \mathcal{J},\mathbf{c}\in\mathbb{F}_{2^q}^n\},
\end{equation}
where $\mathbf{h}^T_j$, $j\in \mathcal{J}=\{1,2,\cdots,m\}$ denotes the $j$-th row vector of the parity-check matrix $\mathbf{H}$.

Pass the codeword $\mathbf{c}\in \mathcal{C}$ through an additional white Gaussian noise (AWGN) channel and denote its output as $\mathbf{r}$. In the receiver, the aim of ML decoding is to determine which codeword has the largest a priori probability $p(\mathbf{r}|\mathbf{c})$ throughout the codeword set $\mathcal{C}$. So the ML decoding problem can be described as
\begin{subequations}\label{ori_ML_c}
        \begin{align}
                \underset{\mathbf{c}}{\rm  max} \  &p(\mathbf{r}|\mathbf{c}),
                 \label{ori_ML_c_a} \\
                 {\rm s.t.}    \ & \mathbf{h}^T_j\mathbf{c}=0,\ j\in \mathcal{J},\mathbf{c}\in\mathbb{F}_{2^q}^n. \label{ori_ML_c_b}
        \end{align}
    \end{subequations}
For any integer $c_i\in\mathbf{c}$, we introduce a new mapping that embeds elements in $\mathbb{F}_{2^q}$ into the Euclidean space of dimension $q$.

\begin{definition}\label{def-bit-embed}
{\textit{Bit embedding}:} Let $f:\mathbb{F}_{2^q}\mapsto\{0,1\}^q$ be a mapping such that for $c_i\in\mathbb{F}_{2^q}$,
\begin{equation*}
f(c_i):=\mathbf{x}_i=[x_{i,1},\cdots,x_{i,\sigma},\cdots,x_{i,q}]^T,\footnote{Here we show some examples of bit embedding: $2\in\mathbb{F}_{2^2}\mapsto[1,0]^T$, $7\in\mathbb{F}_{2^3}\mapsto[1,1,1]^T$, $15\in\mathbb{F}_{2^4}\mapsto[1,1,1,0]^T$.}
\end{equation*}
where $x_{i,\sigma}$ comes from the polynomial representation of $c_i\in\mathbb{F}_{2^q}$
\begin{equation}\label{embed_rule}
p(2)=\sum^{q}_{\sigma=1}2^{\sigma-1}x_{i,\sigma}.
\end{equation}
\end{definition}

Under this embedding rule, any codeword $\mathbf{c}$ can be mapped to a binary vector $\mathbf{x}=[\mathbf{x}_1;\cdots;\mathbf{x}_n]\in\{0,1\}^{nq}$.
We name the new embedding technique as the bit embedding, since the embedding vector $\mathbf{x}_i$ is the bit representation of $c_i\in \mathbb{F}_{2^q}$.
We call $\mathbf{x}$ as an equivalent binary codeword of the non-binary codeword $\mathbf{c}$.

 Define a new vector $\hat{\mathbf{h}}^{(j,i)}=h_{j,i}\cdot [2^{q-1},\cdots,2,1]$ with $q$-length, where all elements in $\hat{\mathbf{h}}^{(j,i)}$ and the multiplication are in $\mathbb{F}_{2^q}$.
 Map all the elements in $\hat{\mathbf{h}}^{(j,i)}$ to a column vector by the bit embedding and concatenate all the column vectors in row-wise, which leads to the embedding result $\hat{\mathbf{H}}_{j,i}\in\{ 0,1 \}^{q\times q}$.\footnote{We present an example of the above procedure. Consider $h_{j,i}=6\in\mathbb{F}_{2^3}$, then $6\cdot[4, 2, 1]=[5, 7, 6]$. Embed $5$, $7$, $6$ respectively, $5\mapsto[1,0,1]^T$, $7\mapsto[1,1,1]^T$, $6\mapsto[1,1,0]^T$. Then $\hat{\mathbf{H}}_{j,i}=\begin{bmatrix}1 \ 1 \ 1\\ 0\ 1\ 1\\ 1\ 1\ 0\end{bmatrix}$.} Then, we have the following lemma.

 \begin{lemma}\label{equivalent_codeword}
 With $\mathbf{x}_i$ as the equivalent binary codeword of the non-binary codeword $c_i$ according to the mapping rule \eqref{embed_rule}. Then, $\hat{\mathbf{H}}_{j,i} \mathbf{x}_i$ is an equivalent binary codeword to $h_{j,i}c_i$ mapped according to the same rule.
 \end{lemma}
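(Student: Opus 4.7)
The plan is to exploit the $\mathbb{F}_2$-linearity that is hidden inside the polynomial representation of $\mathbb{F}_{2^q}$. First, I would unwind the bit embedding into its defining identity: since $\mathbf{x}_i=[x_{i,1},\ldots,x_{i,q}]^T$ satisfies $c_i=\sum_{\sigma=1}^{q}2^{\sigma-1}x_{i,\sigma}$ in $\mathbb{F}_{2^q}$, distributivity of multiplication in the field yields
\[
h_{j,i}\,c_i=\sum_{\sigma=1}^{q}\bigl(h_{j,i}\cdot 2^{\sigma-1}\bigr)\,x_{i,\sigma},
\]
and the $q$ field elements appearing as coefficients are exactly the entries of $\hat{\mathbf{h}}^{(j,i)}=h_{j,i}\cdot[2^{q-1},\ldots,2,1]$ (up to reversal of order). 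This reduces the claim to showing that a particular $\mathbb{F}_2$-linear combination of these $q$ elements is recovered by the matrix $\hat{\mathbf{H}}_{j,i}$.

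Next I would use the key structural property of the bit embedding: $f$ is an isomorphism of additive groups from $(\mathbb{F}_{2^q},+)$ onto $(\{0,1\}^q,\oplus)$, because addition in $\mathbb{F}_{2^q}$ is componentwise XOR of the polynomial coefficients. Applying $f$ to the identity above, and using $x_{i,\sigma}\in\{0,1\}$, gives
\[
f\bigl(h_{j,i}\,c_i\bigr)=\bigoplus_{\sigma=1}^{q} x_{i,\sigma}\,f\bigl(h_{j,i}\cdot 2^{\sigma-1}\bigr),
\]
with $\bigoplus$ denoting componentwise XOR of binary column vectors.

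To finish, I would read off from the construction that the columns of $\hat{\mathbf{H}}_{j,i}$ are precisely the $q$ vectors $f(h_{j,i}\cdot 2^{q-\sigma})$, so (after reindexing) each entry of the XOR sum on the right-hand side is exactly the corresponding entry of $\hat{\mathbf{H}}_{j,i}\mathbf{x}_i$ reduced modulo $2$. Equivalently, if the matrix--vector product is computed over $\mathbb{F}_2$, one obtains $f(h_{j,i}c_i)$ on the nose, which is the meaning of ``equivalent binary codeword'' under the bit embedding.

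The only place calling for care is the column ordering: $\hat{\mathbf{h}}^{(j,i)}$ is listed with descending powers $[2^{q-1},\ldots,2,1]$ while $\mathbf{x}_i$ is indexed by ascending powers in the embedding rule \eqref{embed_rule}, so I would track the index reversal $\sigma\leftrightarrow q-\sigma+1$ explicitly to avoid an off-by-one mistake. Beyond that, nothing is heavy; the lemma is essentially the statement that multiplication by $h_{j,i}$ on $\mathbb{F}_{2^q}$, expressed in the canonical $\mathbb{F}_2$-basis $\{1,2,\ldots,2^{q-1}\}$, is an $\mathbb{F}_2$-linear map whose matrix representation is precisely $\hat{\mathbf{H}}_{j,i}$.
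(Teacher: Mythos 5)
Your proposal is correct and follows essentially the same route as the paper's own proof: expand $c_i$ over the basis $\{2^{q-1},\ldots,2,1\}$, use distributivity to pull $h_{j,i}$ onto the basis elements (giving $\hat{\mathbf{h}}^{(j,i)}$), and then replace each resulting field element by its bit column to assemble $\hat{\mathbf{H}}_{j,i}$. If anything, you are slightly more careful than the paper, since you state explicitly that the embedding is an additive isomorphism onto $(\{0,1\}^q,\oplus)$ (so the final matrix--vector product must be read over $\mathbb{F}_2$) and you flag the descending-versus-ascending power ordering, both of which the paper leaves implicit.
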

\begin{proof}
 See Appendix \ref{proof of equivalent_codeword}.
\end{proof}
Based on Lemma \ref{equivalent_codeword}, the parity-check equation $h_{j,1}c_1+h_{j,2}c_2+\cdots+h_{j,n}c_n=0$ in \eqref{ori_ML_c_b} is equivalent to the following parity-check equations in $\mathbb{F}_2$:
\begin{equation}
\hat{\mathbf{H}}_{j,1} \mathbf{x}_1+\cdots+\hat{\mathbf{H}}_{j,n} \mathbf{x}_n=0,
\end{equation}
which is equal to\label{new-parity-check-equations}
\begin{equation}\label{new-parity-check-equations}
\hat{\mathbf{H}}_{j} \mathbf{x}=0,
\end{equation}
where $\hat{\mathbf{H}}_{j}=[\hat{\mathbf{H}}_{j,1},\cdots,\hat{\mathbf{H}}_{j,n}]\in\{0,1\}^{q\times nq}$.

\begin{figure*}[tp]
  \centering
  \includegraphics[width=17.2cm,height=6.3cm]{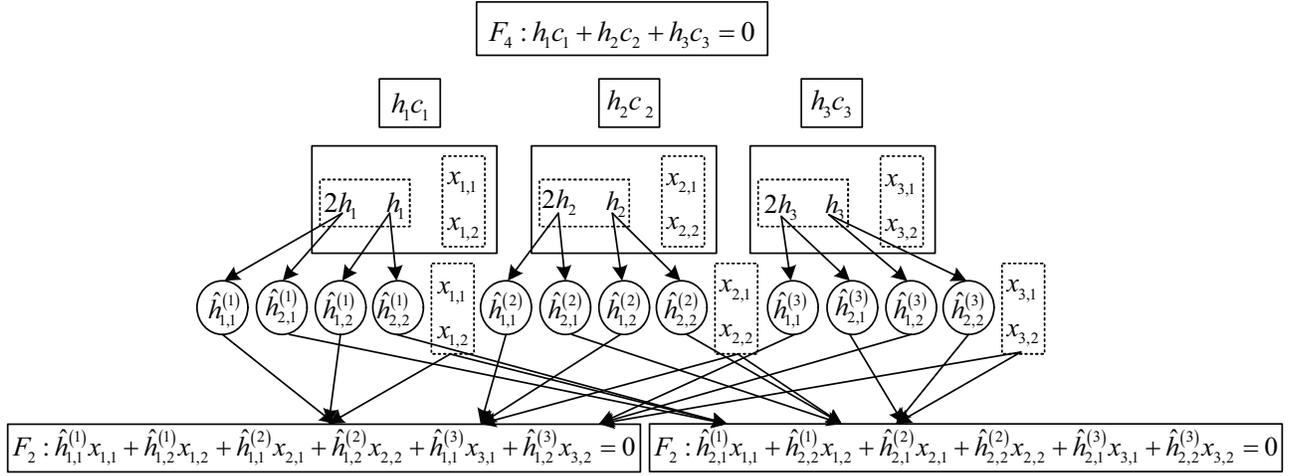}
  \caption{The graph representation for equivalence between the three-variables parity-check equation in $\mathbb{F}_4$ and the two six-variables parity-check equations in $\mathbb{F}_2$, where $h_1, h_2, h_3\in\mathbb{F}_4\setminus0$, $c_1, c_2, c_3\in\mathbb{F}_4$ and $\hat{h}_{1,1}^{(1)},\hat{h}_{1,2}^{(1)},\hat{h}_{2,1}^{(1)},\hat{h}_{2,2}^{(1)},\hat{h}_{1,1}^{(2)},\hat{h}_{1,2}^{(2)},\hat{h}_{2,1}^{(2)},\hat{h}_{2,2}^{(2)},\hat{h}_{1,1}^{(3)},\hat{h}_{1,2}^{(3)},\hat{h}_{2,1}^{(3)},\hat{h}_{2,2}^{(3)},x_{1,1},x_{1,2},x_{2,1},x_{2,2},x_{3,1},x_{3,2}\in\mathbb{F}_2$.}
  \label{factor_graph}
\end{figure*}
Assume the degree of $\mathbf{h}_j^T$ in \eqref{ori_ML_c_b} is three, then Figure \ref{factor_graph} shows an example of equivalence three-variables parity-check equation in $\mathbb{F}_4$ and two six-variables parity-check equations in $\mathbb{F}_2$ following Lemma \ref{equivalent_codeword} .

 Let $\mathcal{X}$ denotes the set consisting of all equivalent binary codewords. Then the ML decoding problem \eqref{ori_ML_c} can be transformed to
\begin{subequations}\label{ori_ML_x}
        \begin{align}
                \underset{\mathbf{x}}{\rm  max} \  &p(\mathbf{r}|\mathbf{x}),
                 \label{ori_ML_x_a} \\
                 {\rm s.t.}    \ & \hat{\mathbf{H}}_{j}\mathbf{x}=0,\ j\in\mathcal{J}, \label{ori_ML_x_b}\\
                 &\mathbf{x}\in\{0,1\}^{nq}.
        \end{align}
    \end{subequations}

  \begin{Assumption}
  The output of the channel is bit-independent, i.e., $p(\mathbf{r}|\mathbf{x})=\displaystyle\prod^{n}_{i=1}\displaystyle\prod_{\sigma=1}^{q}p(r_i|x_{i,\sigma})$.
  \end{Assumption}
  Based on the above assumption, we can derive the optimization model \eqref{ori_ML_x_a} as
  \begin{equation}\label{pre_loglike}
  	\begin{aligned}
  	\underset{\mathbf{x}\in\mathcal{X}}{\rm argmax}\ p(\mathbf{r}|\mathbf{x})=&\underset{\mathbf{x}\in\mathcal{X}}{\rm argmax}\displaystyle\prod^{n}_{i=1}\displaystyle\prod_{\sigma=1}^{q}p(r_i|x_{i,\sigma})\\
  	=&\underset{\mathbf{x}\in\mathcal{X}}{\rm argmin}\displaystyle\sum^n_{i=1}\displaystyle\sum^{q}_{\sigma=1}-{\rm {log}} \ p(r_i|x_{i,\sigma}).
  	\end{aligned}
  \end{equation}
  Putting the constant $\displaystyle\sum^n_{i=1}\displaystyle\sum^{q}_{\sigma
  =1}{\rm {log} }\ p(r_i|x_{i,\sigma}=0)$ into \eqref{pre_loglike}, we get

   \begin{equation}\label{loglike}
  	\begin{aligned}
  	\underset{\mathbf{x}\in\mathcal{X}}{\rm argmax}\ p(\mathbf{r}|\mathbf{x})=&\underset{\mathbf{x}\in\mathcal{X}}{\rm argmin}\displaystyle\sum^n_{i=1}\displaystyle\sum^{q}_{\sigma=1}{\rm{log}}\ \frac{p(r_i|x_{i,\sigma}=0)}{p(r_i|x_{i,\sigma})} \\
  	=& \underset{\mathbf{x}\in\mathcal{X}}{\rm argmin}\displaystyle\sum^n_{i=1}\displaystyle\sum^{q}_{\sigma=1}x_{i,t}{\rm{log}}\ \frac{p(r_i|x_{i,\sigma}=0)}{p(r_i|x_{i,\sigma}=1)} \\
  	=&\underset{\mathbf{x}\in\mathcal{X}}{\rm argmin}\ {\pmb\gamma}^T\mathbf{x},
  	\end{aligned}
  \end{equation}
  where $\pmb\gamma\in\mathbb{R}^{nq}$ is defined by
  \begin{equation}\label{loglike-analysis}
	\begin{aligned}
  	\pmb\gamma=&\left[{\rm{log}} \frac{p(r_1|x_{1,1}=0)}{p(r_1|x_{1,1}=1)},\cdots,{\rm{log}} \frac{p(r_1|x_{1,q}=0)}{p(r_1|x_{1,q}=1)},\right.\\
  	&\left. \cdots,{\rm{log}} \frac{p(r_n|x_{n,1}=0)}{p(r_n|x_{n,1}=1)},\cdots,{\rm{log}} \frac{p(r_n|x_{n,q}=0)}{p(r_n|x_{n,q}=1)}\right].
  	\end{aligned}
  \end{equation}
  Then the ML decoding problem \eqref{ori_ML_x} can be formulated as the following integer program
 \begin{subequations}\label{ML_x}
        \begin{align}
                \underset{\mathbf{x}}{\rm  min} \  &{\pmb\gamma}^T\mathbf{x},
                 \label{ML_x_a} \\
                 {\rm s.t.}    \ & \hat{\mathbf{H}}_{j}\mathbf{x}=0,\ j\in\mathcal{J}, \label{ML_x_b}\\
                 &\mathbf{x}\in\{0,1\}^{nq}\label{ML_x_c}.
        \end{align}
    \end{subequations}

Observing problem \eqref{ML_x}, one can see that the difficulty of solving it lies in how to handle the non-convex binary parity-check constraints \eqref{ML_x_b} and the integer constraint \eqref{ML_x_c}. To address these challenges, we first decompose the parity-check equations in \eqref{ML_x_b} to a series of three-variables binary ones and express them by some linear inequalities.

First we consider the three-variables parity-check equation in $\mathbb{F}_2$
\begin{equation}\label{three-parity-check}
x_1+x_2+x_3=0,x_i\in\{0,1\},i\in\{1,2,3\},
\end{equation}
which can be equivalent to the following inequality system defined in real space
\begin{equation}\label{four_inequalities}
	\begin{aligned}
  	&x_1\leq x_2+x_3,\quad x_2\leq x_1+x_3,\\
  	&x_3\leq x_1+x_2,\quad x_1+x_2+x_3\leq 2,\\
  	&x_1,x_2,x_3\in\{0,1\}
  	\end{aligned}
  \end{equation}
in the sense that \eqref{three-parity-check} and \eqref{four_inequalities} have the same solutions. Define
\begin{equation}\label{w T matrix}
   \begin{aligned}
   \mathbf{w}=\begin{bmatrix}\ 0\ \\ \ 0\ \\ \ 0\ \\ \ 2\ \end{bmatrix},\mathbf{T} = \begin{bmatrix}
                           ~~1  &-1  & -1 \\
                          -1  &~~1 & -1 \\
                          -1  &-1  &~~1 \\
                          ~~1  &~~1 &~~1
                       \end{bmatrix}.
   \end{aligned}
   \end{equation}
Then, \eqref{three-parity-check} can be rewritten as
\begin{equation}
\mathbf{Tx}\preceq \mathbf{w},\mathbf{x}\in\{0,1\}^3,
\end{equation}
where $\mathbf{x}=[x_1,x_2,x_3]^T$.

This technique can be applied to the general parity-check equation with more than three variables and formulate its equivalent expression.
Especially, we consider the $t$-th ($t\in\{1,\cdots,mq\}$) parity-check equation in \eqref{ML_x_b}.
Without loss of generality, we assume it involves $d_t\geq 3$ variables, which are denoted by $x_{\sigma_{1}},\cdots,x_{\sigma_{d_t}}$, the corresponding coefficients $h_{\sigma_1},\cdots,h_{\sigma_{d_t}}$ are 1 and others are 0.
For the first two variables $x_{\sigma_1}$ and $x_{\sigma_2}$, we introduce an auxiliary variable $a_1$ and let them satisfy the three-variables parity-check equation $a_1+h_{\sigma_1}x_{\sigma_1}+h_{\sigma_2}x_{\sigma_2}=0$. For the variables $x_{\sigma_3},\cdots,x_{\sigma_{d_t-3}}$, we introduce two auxiliary variables $a_{k-1}$ and $a_{k}$ for one variable and let them satisfy $a_{k-1}+h_{\sigma_{k+1}}x_{\sigma_{k+1}}+a_{k}=0, k\in\{2,\cdots,d_t-3\}$. Let the auxiliary variable $a_{d_t-3}$ and the last two variables satisfy $a_{d_t-3}+h_{\sigma_{d_t-1}}x_{\sigma_{d_t-1}}+h_{\sigma_{d_t}}x_{\sigma_{d_t}}=0$. Following this procedure, we can avoid the special kind of three-variables parity-check equations that consists of all auxiliaries variables.

Consequently, the multi-variables parity-check equations are decomposed
into finite three-variables ones. Moreover, the numbers of the introduced auxiliary variables and the corresponding three-variables parity-check equations are calculated by $d_t-3$ and $d_t-2$ respectively. Then, the total numbers of the three-variables parity-check equations and the introduced auxiliary variables corresponding to the constraints \eqref{ML_x_b} are
\begin{equation}\label{aux_num}
\Gamma_c=\displaystyle\sum_{t=1}^{mq}(d_t-2),\ \Gamma_a=\displaystyle\sum_{t=1}^{mq}(d_t-3),
\end{equation}
respectively.
Define $\mathbf{v}=[\mathbf{x};\mathbf{u}]\in\{0,1\}^{nq+\Gamma_a}$, where auxiliary variables $\mathbf{u}\in\{0,1\}^{\Gamma_a}$. Define a variable-selecting matrix $\mathbf{Q}_\tau\in\{0,1\}^{3\times(nq+\Gamma_a)}$ of which very row vector includes only one "1" and corresponds to the variables in the $\tau$-th three-variables parity-check equation, $\tau=1,\cdots,\Gamma_c$. Then it can be expressed as $\mathbf{T}\mathbf{Q}_\tau\mathbf{v}\preceq \mathbf{w}$. Furthermore, we define
\begin{subequations}\label{q A b}
        \begin{align}
                \mathbf{q}&=[{\pmb\gamma};\mathbf{0}_{\Gamma_a}]^T\in\mathbb{R}^{N},\label{q} \\
                \mathbf{A}&=[\mathbf{T}\mathbf{Q}_1;\cdots;\mathbf{T}\mathbf{Q}_\tau;\cdots;\mathbf{T}\mathbf{Q}_{\Gamma_c}]\in\mathbb{R}^{M\times N},\label{A}\\
                 \mathbf{b}&=\mathbf{1}_{\Gamma_c}\otimes\mathbf{w}\in\mathbb{R}^{M},\label{b}
        \end{align}
    \end{subequations}
 where $M=4\Gamma_c$ and $N=nq+\Gamma_a$. Then the ML decoding problem \eqref{ML_x} is equivalent to the following linear integer program
 \begin{subequations}\label{LP}
        \begin{align}
                 \min_{\mathbf{v}}\ &\mathbf{q}^T\mathbf{v},\label{LP-a} \\
                {\rm s.t.}\ &\mathbf{Av}\preceq\mathbf{b},\label{LP-b}\\
                 &\mathbf{v}\in\{0,1\}^{nq+\Gamma_a},\label{LP-c}.
        \end{align}

    \end{subequations}

 Due to the discrete binary constraints \eqref{LP-c}, problem \eqref{LP} is non-convex and difficult to solve. In the following, we relax it to a continuous, non-convex, but tractable model.

\section{Non-convex QP formulation and its ADMM solving algorithm}\label{ADMM-algorithm}
\subsection{Non-Convex QP Formulation}
A classical way for the binary constraint \eqref{LP-c} is to relax it to a box constraint $\mathbf{v}\in[0,1]^{nq+\Gamma_a}$, which can simplify the non-convex problem \eqref{LP} to a convex one. However, the resulting optimization problem's solution could be a fractional especially when the decoder works in the low SNR regions. Therefore we add a quadratic penalty term into the objective \eqref{LP-a},
which leads to the following non-convex QP decoder
 \begin{subequations}\label{LP-relax}
        \begin{align}
                 \min_{\mathbf{v}}\ &\mathbf{q}^T\mathbf{v}-\frac{\alpha}{2}\|\mathbf{v}-0.5\|^2_2,\label{LP_relax-a} \\
                {\rm s.t.}\ &\mathbf{Av}\preceq\mathbf{b},\label{LP-relax-b}\\
                 &\mathbf{v}\in[0,1]^{nq+\Gamma_a}\label{LP-relax-c}.
        \end{align}

    \end{subequations}
Inspired by the intrinsic structure of model \eqref{LP-relax}, we propose an efficient solving algorithm via the ADMM technique. In it, the variables in every ADMM iteration can be solved analytically and in a full-parallel pattern. Meanwhile, the computational complexity in each iteration scales linearly with the non-binary LDPC codes length and the considered length $q$.

\subsection{ADMM Algorithm Framework}
By introducing auxiliary variable $\mathbf{e}_1$ and $\mathbf{e}_2$, the decoding problem \eqref{LP-relax} can be equivalent to
\begin{subequations}\label{LP-relax_aux}
        \begin{align}
                 \min_{\mathbf{v}}\ &\mathbf{q}^T\mathbf{v}-\frac{\alpha}{2}\|\mathbf{v}-0.5\|^2_2,\label{LP_relax_aux-a} \\
                {\rm s.t.}\ &\mathbf{Av}+\mathbf{e}_1=\mathbf{b},\label{LP-relax-aux-b}\\
                 & \mathbf{v}=\mathbf{e}_2,\label{LP-relax-aux-c}\\
                 &\mathbf{e}_1\succeq \mathbf{0}_{4\Gamma_c},\mathbf{e}_2\in[0,1]^{nq+\Gamma_a}.\label{LP-relax-aux-d}
        \end{align}

    \end{subequations}
Its augment Lagrangian function can be written as
{\small{
\begin{equation}\label{aug-Lagran}
	\begin{aligned}
  	\mathcal{L}_\mu(\mathbf{v},\mathbf{e}_1,\mathbf{e}_2,\mathbf{y}_1,\mathbf{y}_2)&=\mathbf{q}^T\mathbf{v}\!-\!\frac{\alpha}{2}\|\mathbf{v}-0.5\|^2_2\!+\!\mathbf{y}_1^T(\mathbf{Av}+\mathbf{e}_1-\mathbf{b})\\
  	&\!+\!\mathbf{y}_2^T(\mathbf{v}\!-\!\mathbf{e}_2)\!+\!\frac{\mu}{2}\|\!\mathbf{Av}\!+\!\mathbf{e}_1\!-\!\mathbf{b}\!\|^2_2\!+\!\frac{\mu}{2}\|\!\mathbf{v}\!-\!\mathbf{e}_2\!\|^2_2,
  	\end{aligned}
  \end{equation}
  }}where $\mathbf{y}_1\in \mathbb{R}^{4\Gamma_c}$ and $\mathbf{y}_2\in\mathbb{R}^{nq+\Gamma_a}$ are Lagrangian multipliers and $\mu>0$ is a preset penalty parameter. Based on the augment Lagrangian \eqref{aug-Lagran}, the ADMM solving algorithm for model \eqref{LP-relax_aux} can be described as
    \begin{subequations}\label{subproblems}
        \begin{align}
        \mathbf{v}^{k+1}&=\underset{\mathbf{v}}{{{\rm argmin}}} \ \mathcal{L}_{\mu}\ (\mathbf{v},\mathbf{e}_1^{k},\mathbf{e}_2^{k},\mathbf{y}_1^{k},\mathbf{y}_2^{k}),\label{subproblems-a}\\
        \mathbf{e}_1^{k+1}&=\underset{\mathbf{e}_1\succeq 0}{{\rm argmin}} \ \mathcal{L}_{\mu}\ (\mathbf{v}^{k+1},\mathbf{e}_1,\mathbf{e}_2^k,\mathbf{y}_1^k,\mathbf{y}_2^k),\label{subproblems-b}\\
                 \mathbf{e}_2^{k+1}&=\underset{0\preceq\mathbf{e}_2\preceq 1}{{\rm argmin}} \ \mathcal{L}_{\mu} \label{subproblems-d}\ (\mathbf{v}^{k+1},\mathbf{e}_1^{k+1},\mathbf{e}_2,\mathbf{y}_1^k,\mathbf{y}_2^k),\\
                \mathbf{y}_1^{k+1}&=\mathbf{y}_1^k+\mu(\mathbf{Av}^{k+1}+\mathbf{e}_1^{k+1}-\mathbf{b}),\label{subproblems-c}\\
                 \mathbf{y}_2^{k+1}&=\mathbf{y}_2^{k}+\mu(\mathbf{v}^{k+1}-\mathbf{e}_2^{k+1}),\label{subproblems-e}
        \end{align}
    \end{subequations}where $k$ denotes the iteration number.

 In the following we show that the subproblems \eqref{subproblems-a}-\eqref{subproblems-d} can be solved efficiently be exploiting the inherent structure of the model.
 \subsubsection{Solving Subproblem \eqref{subproblems-a}}
Based on the fact that matrix $\mathbf{A}$ has the property of orthogonality in columns\footnote{Since $\mathbf{T}$ is column-wise orthogonal matrix, $\mathbf{A}  $ is also orthogonal in columns. Besides, it can be found that elements in $\mathbf{A}$ are either 0,1 or -1.}, hence $\mathbf{A}^T\mathbf{A}$ is a diagonal matrix, which means that variables in problem \eqref{subproblems-a} are separable. Therefore the $nq+\Gamma_a$ entries of $\mathbf{v}^{k+1}$ can be obtained by solving the following $nq+\Gamma_a$ subproblems in parallel
\begin{equation}\label{sub-a}
\min_{v_i} \frac{1}{2}\!(\!\mu s_i-\!\alpha+\!\mu\!)v_i^2\!+\!(\!q_i+y_{2,i}^k+\!\mu e_{2,i}^{k+1}\!\!+\!\frac{1}{2}\!\alpha\!+\mathbf{a}^T_i\!(\!\mathbf{y}_1^k\!+\!\mu(\mathbf{e}_1^{k+1}\!\!-\!\mathbf{b})))v_i,
\end{equation}
 where $\mathbf{s}={\rm diag}(\mathbf{A}^T\mathbf{A})=[s_1,\cdots,s_{nq+\Gamma_a}]^T$ and $\mathbf{a}^T_i $ denotes the $i$th row vector of matrix $\mathbf{A}$.

 To guarantee that \eqref{sub-a} is strong convex, we choose proper $\alpha$ and $\mu$ and let them satisfy $\mu (s_{\rm min}+1)>\alpha$, where $s_{\rm min}$ denote the minimum value of the elements in vector $\mathbf{s}$. Then the solution of problem \eqref{sub-a} can be obtained by setting the gradient of objective \eqref{sub-a} to zero, i.e.,
 \begin{equation}\label{v-update-parallel}
 v_i^{k+1}=\left(\frac{\mathbf{a}_i^T(\mathbf{b}-\mathbf{e}_1^k-\frac{\mathbf{y}_1^k}{\mu})+e_{2,i}^k-\frac{y_{2,i}^k}{\mu}-\phi_i}{\theta_i}\right)
 \end{equation}
 where $i\in\{1,\cdots,nq+\Gamma_a\}$, $\phi_i=\frac{2q_i+\alpha}{2\mu}$ and $\theta_i=s_i-\frac{\alpha}{\mu}+1$.

 \subsubsection{Solving Subproblem \eqref{subproblems-b}}
Obviously, all the entries in $\mathbf{e}_1$ are separable in both objective and constraints. Thus the optimal solution of problem \eqref{subproblems-b} can be obtained by setting the gradient of the function $\mathcal{L}_{\mu}\ (\mathbf{v}^{k+1},\mathbf{e}_1,\mathbf{e}_2^k,\mathbf{y}_1^k,\mathbf{y}_2^k)$ to zero, then project the resulting solution of the corresponding linear equation to region $[0,+\infty]^{4\Gamma_c}$, i.e.,
\begin{equation}
 \mathbf{e}_1^{k+1}=\underset{[0,+\infty]^{4\Gamma_c}}{\Pi}\left((\mathbf{b}-\mathbf{A}\mathbf{v}^{k+1})-\frac{\mathbf{y}_1^k}{\mu}\right)
 \end{equation}
 Explicitly, all the entries of $\mathbf{e}_1$ can also be obtained in parallel by
 \begin{equation}\label{e_update_parallel}
 e^{k+1}_{1,j}=\underset{[0,+\infty]}{\Pi}\left((b_j-\mathbf{a}^T_j\mathbf{v}^{k+1})-\frac{y^k_{1,j}}{\mu}\right)
 \end{equation}
 where $j\in\{1,\cdots,4\Gamma_c\}$ and $\underset{[0,+\infty]}{\Pi}(\cdot)$ denotes the projection operator onto the positive quadrant $[0,+\infty]$.

 \subsubsection{Solving Subproblem \eqref{subproblems-d}}
To obtain the solution of $\mathbf{e}_2$, similarly, we set the gradient of the function $\mathcal{L}_{\mu}\ (\mathbf{v}^{k+1},\mathbf{e}_1^{k+1},\mathbf{e}_2,\mathbf{y}_1^k,\mathbf{y}_2^k)$ to zero, then project the resulting solution of the corresponding linear equation to region $[0,1]^{nq+\Gamma_a}$, i.e.,
\begin{equation}
 \mathbf{e}_2^{k+1}=\underset{[0,1]^{nq+\Gamma_a}}{\Pi}\left(\mathbf{v}^{k+1}+\frac{\mathbf{y}_2^{k}}{\mu}\right)
 \end{equation}
 Specially, all the entries of $\mathbf{e}_2$ can also be obtained in parallel by
 \begin{equation}\label{e2_update_parallel}
 e^{k+1}_{2,i}=\underset{[0,1]}{\Pi}\left(v_i^{k+1}+\frac{y_{2,i}^k}{\mu}\right)
 \end{equation}
 where $i\in\{1,\cdots,nq+\Gamma_a\}$ and $\underset{[0,1]}{\Pi}(\cdot)$ denotes the projection operator onto the interval $[0,1]$.

 Since variables $\mathbf{y}_1$ and $\mathbf{y}_2$ are engaged in \eqref{e_update_parallel}  and \eqref{e2_update_parallel} by its scaled form $\frac{\mathbf{y}_1}{\mu}$ and $\frac{\mathbf{y}_2}{\mu}$, thus we update its scaled form to save multiplications. Therefore, we define $\tilde{e}^{k+1}_{1,j}=b_j-\mathbf{a}^T_j\mathbf{v}^{k+1}-\frac{y^k_{1,j}}{\mu}$ and $\tilde{e}^{k+1}_{2,i}=v_i^{k+1}+\frac{y_{2,i}^k}{\mu}$. Then, each element in $\mathbf{y}_1$ and $\mathbf{y}_2$ can be obtained through
 \begin{equation}\label{y1-updata-parallel}
 \frac{y_{1,j}^{k+1}}{\mu}=\left\{\begin{array}{cc}
0, & \text { if } \tilde{e}_{1,j}^{k+1} \geq 0, \\
-\tilde{e}_{1,j}^{k+1}, & \text { otherwise ,}
\end{array}\right.
 \end{equation}

 \begin{equation}\label{y2-updata-parallel}
 \frac{y_{2,i}^{k+1}}{\mu}=\left\{\begin{array}{cc}
0, & \text { if } 1\geq \tilde{e}_{1,j}^{k+1} \geq 0, \\
\tilde{e}_{2,i}^{k+1}, & \text { if }\tilde{e}_{1,j}^{k+1} \leq 0,\\
\tilde{e}_{2,i}^{k+1}-1, & \text { if }\tilde{e}_{1,j}^{k+1} \geq 1.
\end{array}\right.
 \end{equation}

 In Algorithm 1, we summarize the proposed ADMM decoding algorithm for solving model \eqref{LP-relax_aux}.

 \section{Performance analysis}\label{performance-analysis}
\subsection{Codeword-independent property}
In the following, we prove one important characteristic of linear codes, that under the symmetric channel condition, the failure probability of the decoding algorithm is independent of the codeword that is transmitted. We prove this property holds for our proposed non-binary ADMM decoder under the bit embedding.
\begin{theorem}\label{theorem-codeword-independent}
Under the channel symmetry condition in the sense of \cite{Flanagan}, the probability that Algorithm \ref{ADMM-QP} fails is independent of the codeword that was transmitted.
\end{theorem}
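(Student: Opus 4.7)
The plan is to run the classical orbit-symmetry argument for codeword independence, adapted to the bit-embedding and three-variable decomposition used in this paper. Let $\mathbf{c}$ be an arbitrary transmitted codeword with bit-embedding $\mathbf{x}^*\in\{0,1\}^{nq}$, and let $\mathbf{u}^*\in\{0,1\}^{\Gamma_a}$ be the auxiliary values uniquely determined by propagating $\mathbf{x}^*$ through the decomposition of Section~\ref{ML-QP}, so that $\mathbf{v}^*=[\mathbf{x}^*;\mathbf{u}^*]$ is a feasible vertex of \eqref{LP-b}. I would then define the coordinate-wise involution $\Psi(\mathbf{v})=\mathbf{v}^*+(\mathbf{1}_N-2\mathbf{v}^*)\odot\mathbf{v}$, which maps $[0,1]^N$ onto itself, fixes $\tfrac{1}{2}\mathbf{1}_N$, and leaves the penalty $\|\mathbf{v}-0.5\|_2^2$ invariant. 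The first substantive step is to show that $\Psi$ permutes the rows of \eqref{LP-b} block by block: for each three-variable parity check, the four inequalities encoded by $\mathbf{T}$ and $\mathbf{w}$ in \eqref{w T matrix} are permuted among themselves by every parity-preserving bit-flip pattern, which is exactly the case because $\mathbf{v}^*$ satisfies each decomposed parity. This yields a permutation matrix $\mathbf{P}$ with $\mathbf{A}\Psi(\mathbf{v})-\mathbf{b}=\mathbf{P}(\mathbf{A}\mathbf{v}-\mathbf{b})$, and hence $\Psi$-invariance of the feasible set of \eqref{LP-relax}.

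Next I would invoke Flanagan-type channel symmetry to transfer this geometric symmetry to a probabilistic one. Under the symmetry assumption, the joint law of the bit-LLR vector $\pmb\gamma$ of \eqref{loglike-analysis} under $\mathbf{c}$-transmission coincides with the joint law of $(\mathbf{1}_{nq}-2\mathbf{x}^*)\odot\tilde{\pmb\gamma}$ under all-zeros transmission, where $\tilde{\pmb\gamma}$ denotes the LLR in that latter experiment. A direct computation then gives $\mathbf{q}^T\Psi(\mathbf{v})=\tilde{\mathbf{q}}^T\mathbf{v}+\text{const}$ with $\tilde{\mathbf{q}}=[\tilde{\pmb\gamma};\mathbf{0}_{\Gamma_a}]$, so together with the penalty invariance the objective of \eqref{LP-relax} transforms covariantly under $\Psi$ when the LLRs are replaced by their all-zeros counterpart.

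Finally, starting from a symmetric initialization such as $\mathbf{v}^{(0)}=\tfrac{1}{2}\mathbf{1}_N$, $\mathbf{e}_1^{(0)}=\mathbf{0}$, $\mathbf{e}_2^{(0)}=\tfrac{1}{2}\mathbf{1}_N$, $\mathbf{y}_1^{(0)}=\mathbf{0}$, $\mathbf{y}_2^{(0)}=\mathbf{0}$ (all fixed by $\Psi$ and block-wise by $\mathbf{P}$), I would prove by induction on the iteration index $k$ that
\begin{equation*}
\bigl(\mathbf{v}_{\mathbf{c}}^{(k)},\,\mathbf{e}_{1,\mathbf{c}}^{(k)},\,\mathbf{e}_{2,\mathbf{c}}^{(k)},\,\mathbf{y}_{1,\mathbf{c}}^{(k)},\,\mathbf{y}_{2,\mathbf{c}}^{(k)}\bigr)=\bigl(\Psi(\mathbf{v}_{\mathbf{0}}^{(k)}),\,\mathbf{P}\mathbf{e}_{1,\mathbf{0}}^{(k)},\,\Psi(\mathbf{e}_{2,\mathbf{0}}^{(k)}),\,\mathbf{P}\mathbf{y}_{1,\mathbf{0}}^{(k)},\,(\mathbf{1}_N-2\mathbf{v}^*)\odot\mathbf{y}_{2,\mathbf{0}}^{(k)}\bigr),
\end{equation*}
where the subscripts indicate which codeword is transmitted (the all-zeros experiment uses the Flanagan-transformed observation). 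The inductive step is a direct verification on \eqref{v-update-parallel}, \eqref{e_update_parallel}, \eqref{e2_update_parallel}, \eqref{y1-updata-parallel}, and \eqref{y2-updata-parallel}: the $\mathbf{v}$-update is coordinate-wise affine because $\mathbf{A}^T\mathbf{A}$ is diagonal, the $\mathbf{e}_1$-projection onto $[0,\infty)$ commutes with the row permutation $\mathbf{P}$, and the $\mathbf{e}_2$-projection onto $[0,1]$ commutes with $\Psi$ because $[0,1]$ is symmetric about $0.5$. Combining the iterate equivariance with the equality in distribution of the two channel outputs yields the claim. I anticipate that the main obstacle will be the careful bookkeeping of how $\mathbf{P}$ interacts with $\mathbf{A}^T$ inside the inner products $\mathbf{a}_i^T(\cdot)$ appearing in the $\mathbf{v}$-update, and in particular verifying that the sign flips induced by $\Psi$ exactly match the row permutation $\mathbf{P}$; this verification is local to each three-variable block and reduces to a finite case check on \eqref{w T matrix}.
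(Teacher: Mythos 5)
Your proposal follows essentially the same route as the paper's own proof: your involution $\Psi$ is the paper's relative-vector map $\mathrm{R}_{\mathbf{v}}$, your block permutation $\mathbf{P}$ acting on the four slack rows per three-variable check is the paper's operators $\mathcal{M}_{\mathbf{v}_\tau}$ and $\mathcal{T}_{\mathbf{v}}$, your sign flip on $\mathbf{y}_2$ is the paper's $\mathcal{K}_{\mathbf{v}}$, and your induction on the five ADMM iterates is precisely the paper's Lemma \ref{lemma-symmetry-v0k-v0k+1} combined with Lemma \ref{lemma-v0=Rv}. The one point you leave implicit --- that the stopping criterion fires at the same iteration in both runs --- follows in one line from your identity $\mathbf{A}\Psi(\mathbf{v})-\mathbf{b}=\mathbf{P}(\mathbf{A}\mathbf{v}-\mathbf{b})$, which is also how the paper closes its argument.
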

\begin{proof}
See Appendix \ref{proof of codeword-independent}.
\end{proof}

\subsection{Computational complexity}
In this subsection, we we focus on analyzing the complexity of the proposed Algorithm 1 in each iteration.
Moreover the presented result only includes multiplications since they cost the most computational resource in practice.
Now Since all the entries in matrix $\mathbf{A}$ are either 0, 1 or -1, we can see that all multiplications with regard to $\mathbf{A}$ can be replaced by addition. Moreover, $\phi_i$ and $\theta_i$ in \eqref{v-update-parallel} can be calculated in advance before we launch the ADMM iterations.

Consider updating $\mathbf{v}^{k+1}$ first. Based on the analytic solution of $v_i^{k+1}$ in \eqref{v-update-parallel}, one can find that updating $\mathbf{v}^{k+1}$ costs $nq+\Gamma_a$ multiplications.
Moreover, observing \eqref{e_update_parallel} and \eqref{e2_update_parallel}, we can find that it takes no multiplication operation to update $\mathbf{e}_1^{k+1}$ and $\mathbf{e}_2^{k+1}$. In addition, observing variables $\mathbf{y}_1$ and $\mathbf{y}_2$ in \eqref{e_update_parallel}, \eqref{e2_update_parallel} and \eqref{v-update-parallel}, apparently their scalded form $\frac{\mathbf{y}_1}{\mu}$ and $\frac{\mathbf{y}_2}{\mu}$ are engaged in updating. Hence we conclude that calculating $\frac{\mathbf{y}_1^{k+1}}{\mu}$ and $\frac{\mathbf{y}_2^{k+1}}{\mu}$ only need some addition operations from \eqref{y1-updata-parallel} and \eqref{y2-updata-parallel}. Based on the above analysis, the total multiplications of Algorithm 1 in each iteration are $nq+\Gamma_a$.

 Moreover, one can find that $\Gamma_a\leq mq(d-3)=nq(1-R)(d-3)$, where $R$ denotes the code rate and $d$ is the largest check node degree. This implies that $\Gamma_a$ is comparable to code length $n$ since $d\ll n$ in the case of LDPC codes. Hence we conclude that the computational complexity of the proposed ADMM decoding algorithm in every iteration scales linearly the non-binary LDPC code length and the considered length $q$, or roughly $\mathcal{O}(nq)$.
Moreover, we should note that the entries in $\mathbf{v}$, $\mathbf{e}_1$, $\mathbf{e}_2$, $\mathbf{y}_1$ and $\mathbf{y}_2$ can be updated in a full-parallel pattern in contrast with \cite{wang-nonbinary}, which indicates it is free of time-consuming Euclidean projection onto high dimensional parity-check/simplex polytopes (\cite{Liu-nonbinary-journal}).
 \begin{algorithm}[t]
 \caption{The proposed ADMM decoding algorithm}
 \label{ADMM-QP}
\begin{algorithmic}[1]
 \State Construct $\hat{\mathbf{H}}_j$ in \eqref{new-parity-check-equations} based on the parity-check matrix $\mathbf{H}$. Compute log-likelihood ration $\pmb\gamma$ via \eqref{loglike-analysis}. Construct $\mathbf{q}$, $\mathbf{A}$ and $\mathbf{b}$ via \eqref{q}, \eqref{A}, \eqref{b} respectively. Let $\mu (s_{\rm min}+1)>\alpha$. Compute $\phi_i$ and $\theta_i$ in \eqref{sub-a}.
 \State Initialize $\{\mathbf{v},\mathbf{e}_1,\mathbf{e}_2,\mathbf{y}_1,\mathbf{y}_2\}$ as the all-zeros vectors. Set $k=0$.
 \State  \textbf{Repeat}
 \State \hspace{0.3cm}  Update $\{v_i^{k+1}|i=1,\cdots,nq+\Gamma_a\}$ in parallel by \eqref{v-update-parallel}.
 \State \hspace{0.3cm} Update $\{e_{1,j}^{k{}+1}|j=1,\cdots,4\Gamma_c\}$ in parallel by \eqref{e_update_parallel}.
  \State \hspace{0.3cm}  Update $\{e_{2,i}^{k{}+1}|i=1,\cdots,nq+\Gamma_a\}$ in parallel by \eqref{e2_update_parallel}.
 \State \hspace{0.3cm}  Update $\frac{\mathbf{y}_1^{k+1}}{\mu}$ and $\frac{\mathbf{y}_2^{k+1}}{\mu}$ in parallel by \eqref{y1-updata-parallel} and \eqref{y2-updata-parallel} respectively.
 \State \hspace{0.3cm}  $k=k+1$.
 \State \textbf{Until} $\|\mathbf{Av}^k+\mathbf{e}_1^k-\mathbf{b}\|^2_2$ or $\|\mathbf{v}^k-\mathbf{e}_2^k\|$ is less than some preset $\epsilon$.
\end{algorithmic}
\end{algorithm}

\section{Simulation results}\label{simulation}
\begin{figure*}[tp]
\subfigure[Tanner (1055,424)-$\mathcal{C}_{1}$ in $\mathbb{F}_4$ with QPSK modulation.]{
    \begin{minipage}{8.5cm}
    \centering
        \includegraphics[width=3.5in,height=2.7in]{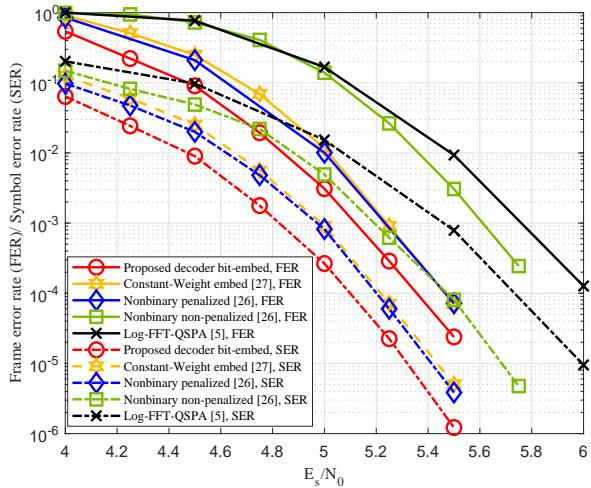}
            \label{fer1055}
    \end{minipage}%
    }
    \subfigure[{PEG (504,252)-$\mathcal{C}_{2}$ in $\mathbb{F}_4$ with QPSK modulation.}]{
    \begin{minipage}{8.5cm}
    \centering
        \includegraphics[width=3.5in,height=2.7in]{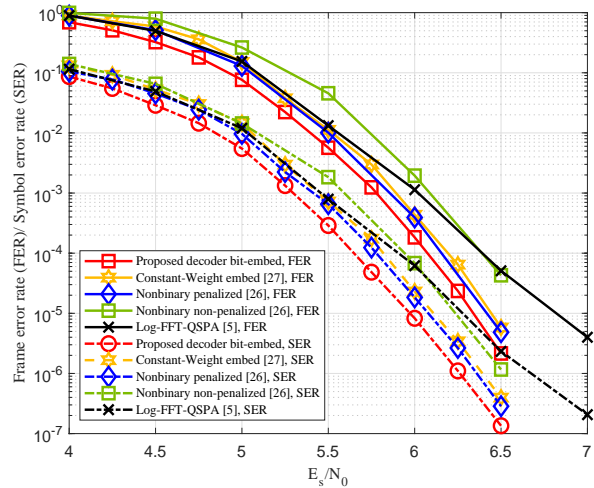}
            \label{fer504}
    \end{minipage}%
    }\
 \subfigure[{MacKay (204,102)-$\mathcal{C}_{3}$ in $\mathbb{F}_8$ with 8PSK modulation.}]{
    \begin{minipage}{8.5cm}
    \centering
        \includegraphics[width=3.5in,height=2.7in]{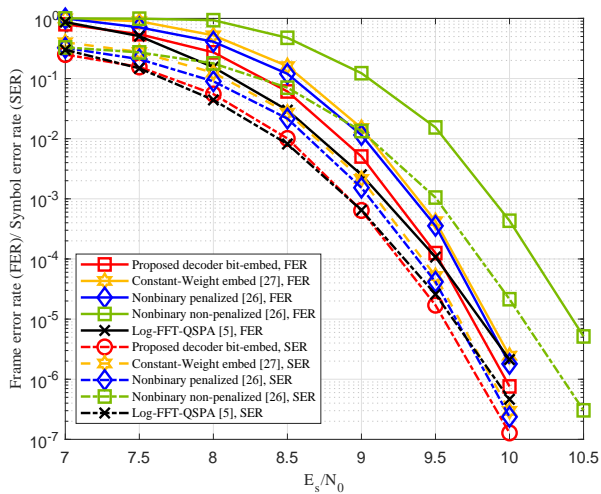}
            \label{fer204}
    \end{minipage}%
    }
    \subfigure[Tanner (155,64)-$\mathcal{C}_{4}$ in $\mathbb{F}_{16}$ with 16QAM.]{
    \begin{minipage}{8.5cm}
    \centering
        \includegraphics[width=3.5in,height=2.7in]{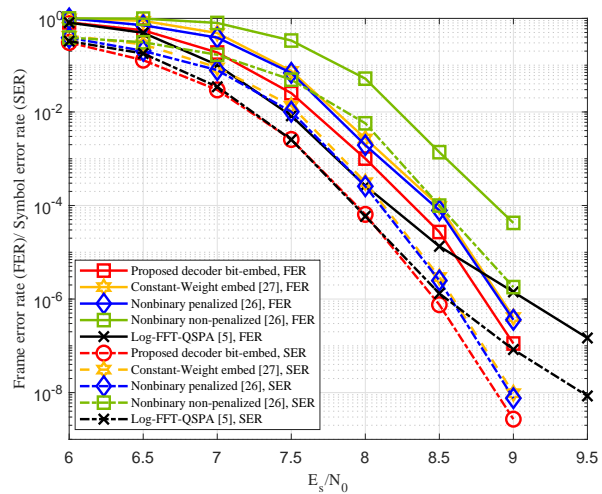}
            \label{fer155}
    \end{minipage}%
    }
     \centering
    \caption{{Comparisons of FER/SER performance for four LDPC codes from \cite{Mackay-code} and \cite{Tanner-code} in different Galois fields  with different modulations.}}
    \label{fer_ser}
 \end{figure*}

\begin{figure*}[tp]
\subfigure[Tanner (1055,424) code $\mathcal{C}_{1}$ in $\mathbb{F}_4$ with QPSK modulation.]{
    \begin{minipage}{8.5cm}
    \centering
        \includegraphics[width=3.5in,height=2.6in]{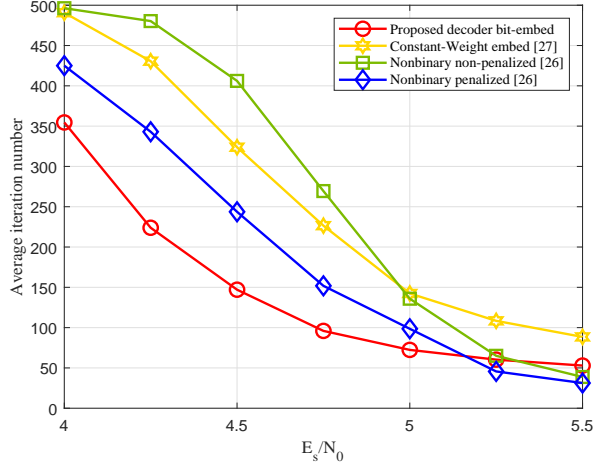}
            \label{iter1055}
    \end{minipage}%
    }
    \subfigure[{PEG (504,252) code $\mathcal{C}_{2}$ in $\mathbb{F}_4$ with QPSK modulation.}]{
    \begin{minipage}{8.5cm}
    \centering
        \includegraphics[width=3.5in,height=2.6in]{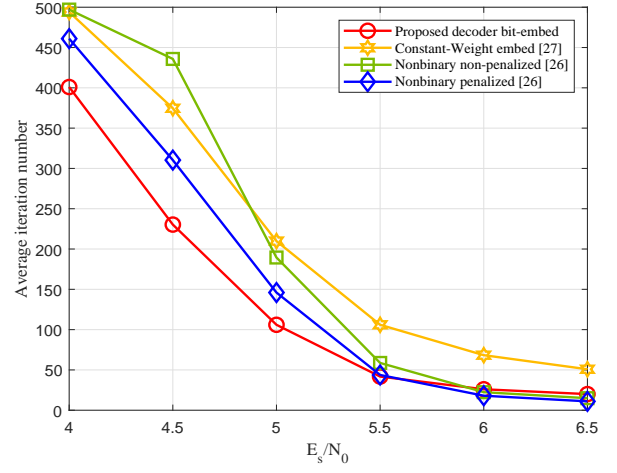}
            \label{iter504}
    \end{minipage}%
    }\
\subfigure[{MacKay (204,102) code $\mathcal{C}_{3}$ in $\mathbb{F}_8$ with 8PSK modulation.}]{
    \begin{minipage}{8.5cm}
    \centering
        \includegraphics[width=3.5in,height=2.6in]{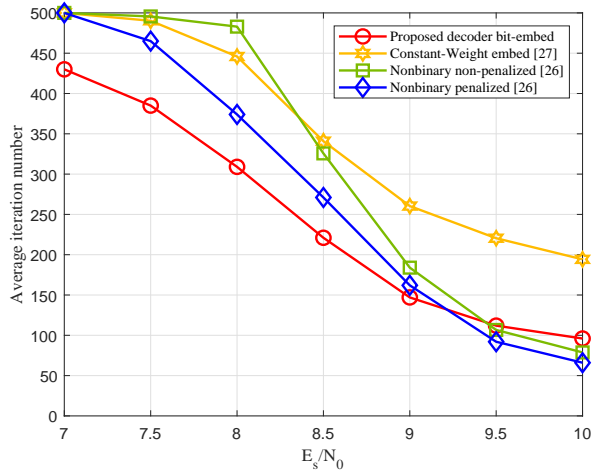}
            \label{iter204}
    \end{minipage}%
    }
    \subfigure[Tanner (155,64) code $\mathcal{C}_{4}$ in $\mathbb{F}_{16}$ with 16QAM modulation.]{
    \begin{minipage}{8.5cm}
    \centering
        \includegraphics[width=3.5in,height=2.6in]{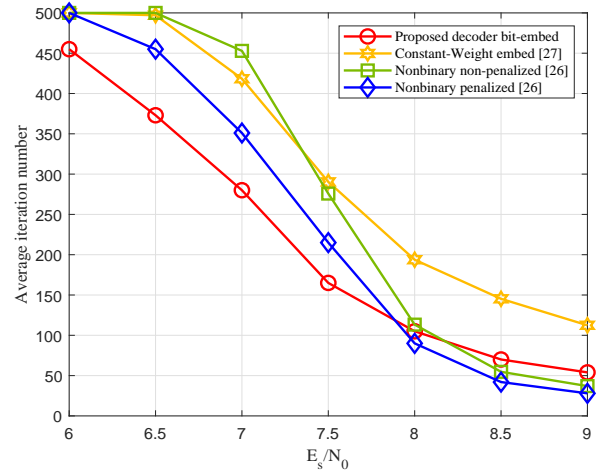}
            \label{iter155}
    \end{minipage}%
    }
     \centering
    \caption{Comparisons of average number of iterations for four LDPC codes from \cite{Mackay-code} and \cite{Tanner-code} using different decoders with different modulations.
    }
    \label{iter4}
 \end{figure*}

\begin{figure*}[tp]
\subfigure[{Tanner (1055,424)-$\mathcal{C}_{1}$ in $\mathbb{F}_4$ with QPSK modulation.}]{
    \begin{minipage}{8.5cm}
    \centering
        \includegraphics[width=3.5in,height=2.6in]{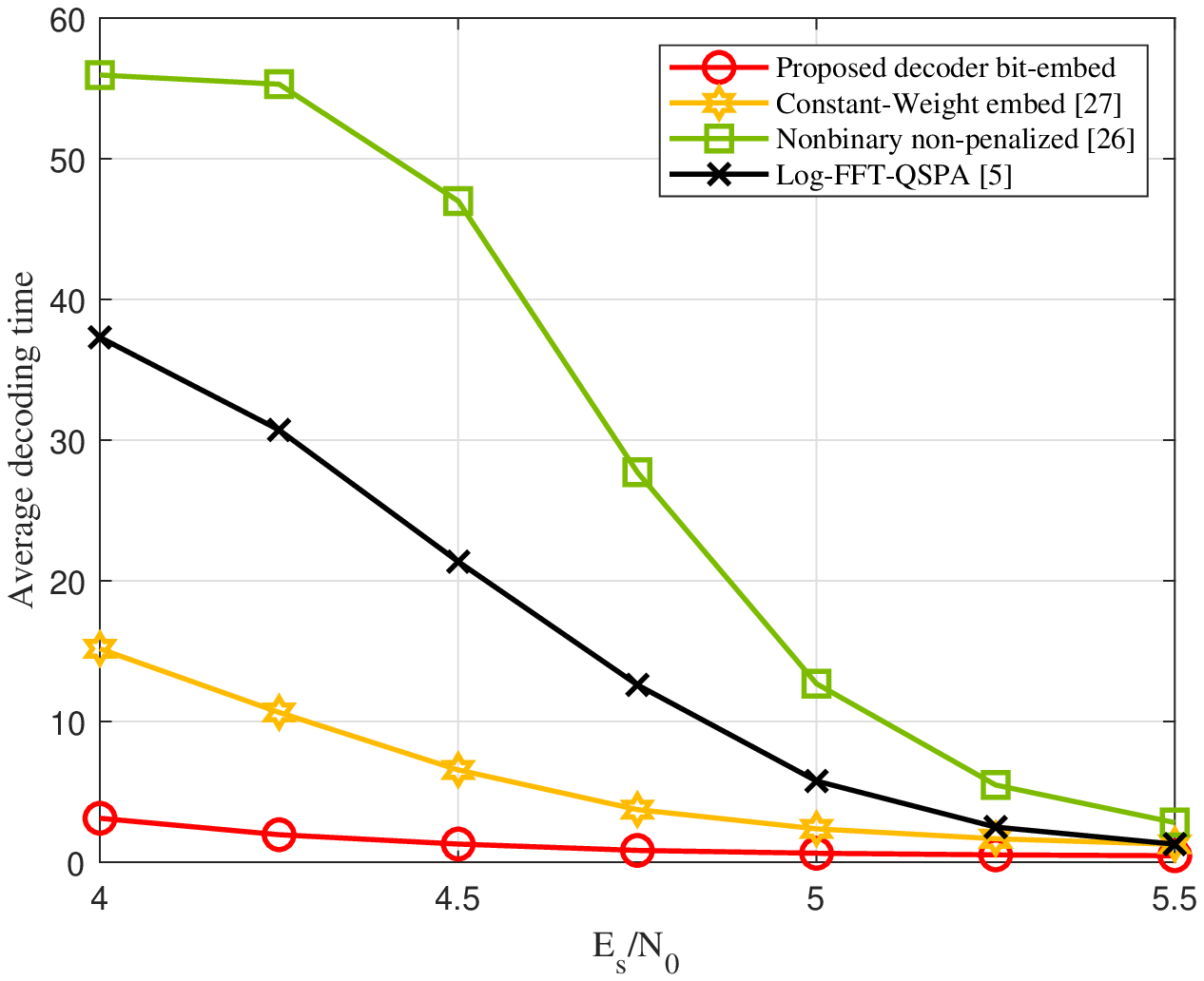}
            \label{time1055}
    \end{minipage}%
    }
    \subfigure[{PEG (504,252)-$\mathcal{C}_{2}$ in $\mathbb{F}_4$ with QPSK modulation.}]{
    \begin{minipage}{8.5cm}
    \centering
        \includegraphics[width=3.5in,height=2.6in]{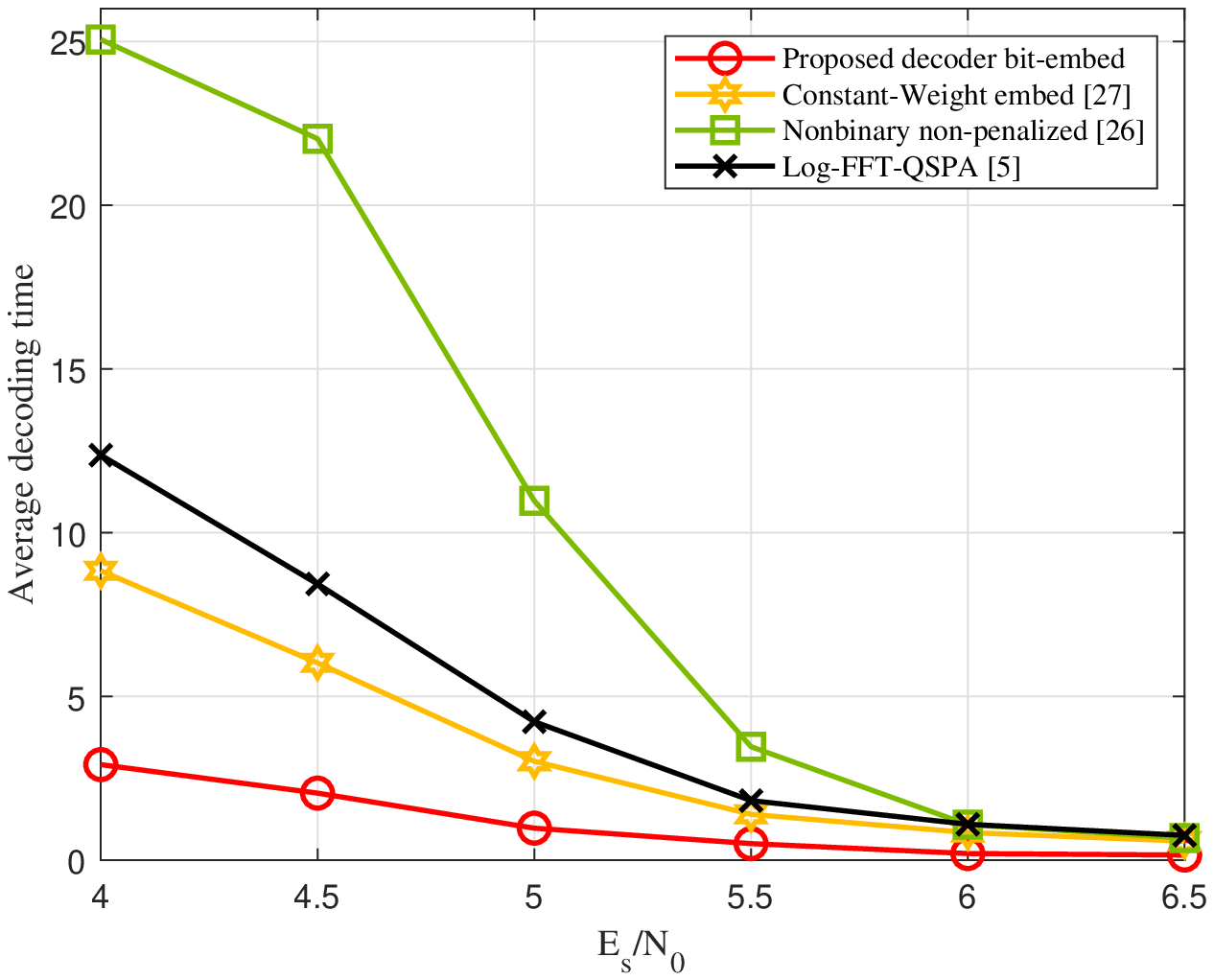}
            \label{time504}
    \end{minipage}%
    }\
\subfigure[{MacKay (204,102)-$\mathcal{C}_{3}$ in $\mathbb{F}_8$ with 8PSK modulation.}]{
    \begin{minipage}{8.5cm}
    \centering
        \includegraphics[width=3.5in,height=2.6in]{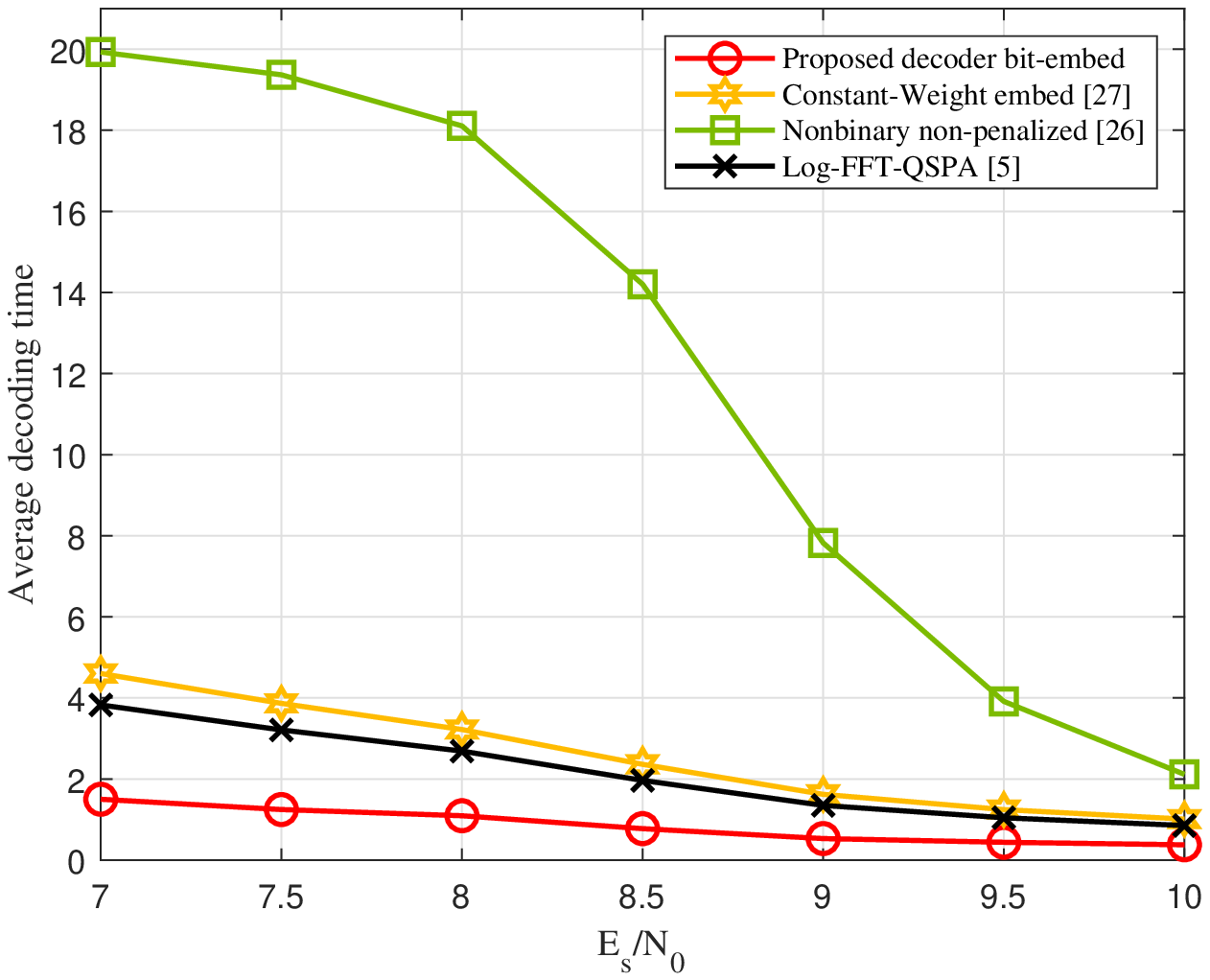}
            \label{time204}
    \end{minipage}%
    }
    \subfigure[{Tanner (155,64)-$\mathcal{C}_{4}$ in $\mathbb{F}_{16}$ with 16QAM modulation.}]{
    \begin{minipage}{8.5cm}
    \centering
        \includegraphics[width=3.5in,height=2.6in]{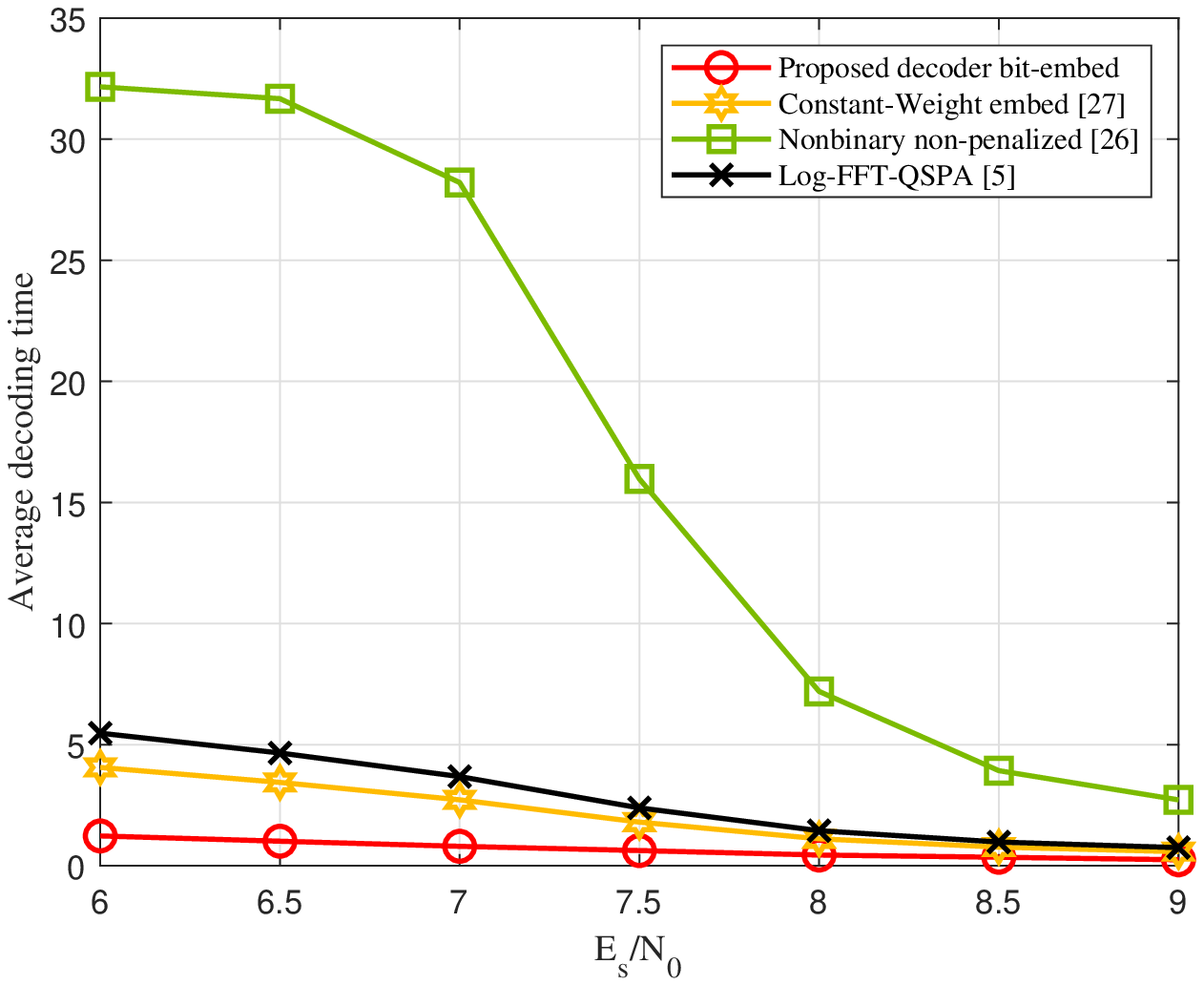}
            \label{iter155}
    \end{minipage}%
    }
     \centering
     \caption{Comparisons of average decoding time for four LDPC codes from \cite{Mackay-code} and \cite{Tanner-code} using different decoders with different modulations.
    }
    \label{time4}
 \end{figure*}

In this section, we provide several numerical results to show the error-correction performance (frame error rate (FER) and symbol error rate (SER)) and the decoding efficiency of the proposed non-binary ADMM decoder, which are also compared with several state-of-the-art non-binary LDPC decoders.

The four codes we considered are Tanner (1055,424)-$\mathcal{C}_1$ from \cite{Tanner-code}, irregular PEG (504,252)-$\mathcal{C}_2$ from \cite{Mackay-code}, rate-0.5 MacKay (204,102)-$\mathcal{C}_3$ from \cite{Mackay-code} and Tanner (155,64)-$\mathcal{C}_4$ from \cite{Tanner-code}. The same parity-check matrix are used for $\mathcal{C}_1$ and $\mathcal{C}_2$, of which all the entries are binary but are settled as elements in $\mathbb{F}_4$. Meanwhile $\mathcal{C}_1$ and $\mathcal{C}_2$ are modulated by quadrature phase shift keying (QPSK). For $\mathcal{C}_3$, the first entry to the sixth entry of each row of the parity-check matrix are set to 1/4/6/5/2/1$\in\mathbb{F}_8$, as the same setting in \cite{LCLP-nonbinary}. The parity-check matrix of $\mathcal{C}_4$ are also binary but all the entries are settled to the elements in $\mathbb{F}_{16}$. 8 phase-shift keying (8PSK) and sixteen quadrature amplitude modulation (16QAM) are used to modulate $\mathcal{C}_3$ and $\mathcal{C}_4$ respectively. The modulated symbols are transmitted over additive white Gaussian noise (AWGN) channel. We compare our proposed ADMM decoder to state-of-the-art decoders, i.e., proximal-ADMM decoder \cite{wang-nonbinary}, logarithm-domain fast-Fourier-transform-based Q-ary sum-product algorithm (Log-FFT-QSPA) \cite{burst-error}, non-binary non-penalized decoder and non-binary penalized decoder in \cite{Liu-nonbinary-journal}. We set the parameters of the Algorithm 1 as follows: parameter $\alpha$ are set to 0.9 for all four codes, parameter $\mu$ are set to $1.1,0.9,0.8,0.7$ for all codes respectively. The randomly generated codewords are used as input. When $\|\mathbf{Av}^{k}+\mathbf{e}_1^k-\mathbf{b}\|^2_2\leq 10^{-5}$ or $\|\mathbf{v}^k-\mathbf{e}_2^k\|^2_2$ is satisfied or the maximum iteration number 500 is reached, we stop the ADMM iteration. The simulation environment is MATLAB 2019a/Windows 10 on a computer with 3GHz Intel i5-9400 CPU and 16GB RAM.

Figure \ref{fer_ser} demonstrates the FER/SER performance of codes $\mathcal{C}_1$-$\mathcal{C}_4$ under different decoders. All the data points are generated based on at least 200 error frames except for the last two points in each curve, which are generated based on 50 error frames due to the limited computational resources. From Figure \ref{fer1055} to \ref{fer155}, we can observe that the proposed decoder has a better performance than the proximal-ADMM decoder in \cite{wang-nonbinary} and the penalized decoder in \cite{Liu-nonbinary-journal}, and are much better than the non-penalized decoder in \cite{Liu-nonbinary-journal}. Especially for the long non-binary LDPC codes in Figure \ref{fer1055}, the proposed decoder outperforms the decoder in \cite{wang-nonbinary} and \cite{Liu-nonbinary-journal} about 0.1dB. Moreover, the proposed decoder shows a better error-correction performance than the Log-FFT-QSPA decoder \cite{burst-error} in Figure \ref{fer1055}, \ref{fer504}. In Figure \ref{fer204} and \ref{fer155}, the proposed decoder has a similar performance to the Log-FFT-QSPA decoder in a low SNR region, but performs better in a high SNR region. In summary, our proposed ADMM decoder exhibits a superior error-correction performance than state-of-the-art decoders.

\begin{table}[t]
\caption{Comparison of the number of constraints and variables in different decoding Algorithm.}
\label{constraints-variable-table}
\renewcommand{\arraystretch}{1.3}
\begin{center}
\begin{tabular}{|c|c|c|c|c|}
\hline
\hline
\multirow{2}*{Codeword}  &  \multicolumn{2}{c|}{CW embedding in \cite{wang-nonbinary}} &\multicolumn{2}{c|}{the proposed decoder}                                     \\ \cline{2-5}
  & constraints &variables   & constraints &variables                \\ \hline
  $\mathcal{C}_1$ & 25109 &9284&15192&4642  \\ \hline
  $\mathcal{C}_2$ & 19387 &6092&12124&3572  \\ \hline
  $\mathcal{C}_3$ & 11934 &4080&8160&2346  \\ \hline
  $\mathcal{C}_4$ & 17081 &5456&4464&1364  \\ \hline
\hline
\end{tabular}
\end{center}
\end{table}
Figure \ref{iter4} and Figure \ref{time4} shows the averaged number of iteration and the averaged decoding time of four codes for the proposed ADMM decoder, the proximal-ADMM decoder \cite{wang-nonbinary}, the non-binary (non)penalized decoder \cite{Liu-nonbinary-journal} and the logarithm-domain fast-Fourier-transform-based Q-ary sum-product algorithm (Log-FFT-QSPA) \cite{burst-error} respectively. All the data points in Figure \ref{iter4} and Figure \ref{time4} are averaged based on generating one million LDPC frames. In Figure \ref{iter4}, we can observe that the proposed ADMM decoder takes more averaged number of iteration than the non-binary (non)penalized decoder in \cite{Liu-nonbinary-journal} in a high SNR region. This could be that more auxiliary variables are required in the proposed decoding model compared to the model in \cite{Liu-nonbinary-journal}. Moreover, the proposed ADMM decoder takes less averaged number of iteration and averaged decoding time than the proximal-ADMM decoder in \cite{wang-nonbinary}. This is mainly because that the proposed decoding model solves much less constraints and variables than the proximal-ADMM decoder \cite{wang-nonbinary}.
In Table \ref{constraints-variable-table}, we compare the number of constraints and variables (the number of rows and columns of $\mathbf{A}$ in the decoding model) in the proposed ADMM decoding model and the decoding model in \cite{wang-nonbinary}.
Obviously, our proposed decoding model solves less constraints and variables, hence our ADMM decoding algorithm is more efficient.

\section{Conclusion}\label{conclusion}
In this paper, an efficient decoder is proposed for non-binary LDPC codes in $\mathbb{F}_{2^q}$ through ADMM method based on the proposed bit embedding technique.
We show that the decoding complexity grows linearly as the non-binary LDPC codes length and the considered length $q$, which outperforms state-of-the-art ADMM decoders.
 Moreover the performance of the proposed decoding algorithm satisfies the codeword-independent property.
Numerical results reveal the outstanding performance in error-correction and complexity compared with other ADMM decoders.

\appendices

\section{Proof of lemma \ref{equivalent_codeword}}\label{proof of equivalent_codeword}
\begin{proof}
Replace $c_i$, $i=1,\cdots,n$ by its bits representation, then we have
\begin{equation}\label{Append1-1}
\begin{aligned}
h_{j,i}c_i&=h_{j,i}\begin{bmatrix}2^{q-1} &\cdots& 2& 1\end{bmatrix} \begin{bmatrix} x_{i,1}\\ \vdots \\ x_{i,q-1}\\ x_{i,q} \end{bmatrix}\\
&=\begin{bmatrix}2^{q-1}h_{j,i} &\cdots& 2h_{j,i}& h_{j,i} \end{bmatrix} \begin{bmatrix} x_{i,1}\\ \vdots \\ x_{i,q-1}\\ x_{i,q} \end{bmatrix}\\
&=\hat{\mathbf{h}}^{(j,i)}\mathbf{x}_{i},
\end{aligned}
\end{equation}
where $\hat{\mathbf{h}}^{(j,i)}=\begin{bmatrix}\hat{h}^{(j,i)}_1 &\cdots & \hat{h}^{(j,i)}_{q-1}&\hat{h}^{(j,i)}_{q}\end{bmatrix}=\begin{bmatrix}2^{q-1}h_{j,i} &\cdots& 2h_{j,i}& h_{j,i} \end{bmatrix}$.

Use the bits representation to denote all the entries in $\hat{\mathbf{h}}^{(j,i)}$, then we get
\begin{equation}\label{Append1-2}
\begin{aligned}
\hat{\mathbf{h}}^{(j,i)}&=\begin{bmatrix}2^{q-1} &\cdots& 2& 1\end{bmatrix}\begin{bmatrix} \hat{h}^{(j,i)}_{1,1} &\cdots& \hat{h}^{(j,i)}_{1,q-1}& \hat{h}^{(j,i)}_{1,q}\\ \vdots&\vdots&\vdots&\vdots \\ \hat{h}^{(j,i)}_{q-1,1} &\cdots& \hat{h}^{(j,i)}_{q-1,q-1}& \hat{h}^{(j,i)}_{q-1,q}\\ \hat{h}^{(j,i)}_{q,1} &\cdots& \hat{h}^{(j,i)}_{q,q-1}& \hat{h}^{(j,i)}_{q,q} \end{bmatrix} \\
&=\begin{bmatrix}2^{q-1} &\cdots& 2& 1\end{bmatrix} \hat{\mathbf{{H}}}_{j,i}.
\end{aligned}
\end{equation}
Combine \eqref{Append1-1} with \eqref{Append1-2}, we can obtain

\begin{equation}
h_{j,i}c_i=\begin{bmatrix}2^{q-1} &\cdots& 2& 1\end{bmatrix}\hat{\mathbf{{H}}}_{j,i}\mathbf{x}_i.
\end{equation}
Follow the bit embedding rule in \eqref{embed_rule}, we can conclude that $\hat{\mathbf{{H}}}_{j,i}\mathbf{x}_i$ is an equivalent binary codeword to $h_{j,i}c_i$.

\end{proof}

\section{Proof of theorem \ref{theorem-codeword-independent}}\label{proof of codeword-independent}
1) \emph{Sketch of the Proof}: We need to prove $\rm{Pr}[\rm{error}|\mathbf{0}]=Pr[\rm{error}|\mathbf{v}]$, where $\mathbf{v}$ is any non-zero codeword. Let

$\mathcal{B}$($\mathbf{v}$):=$\{\mathbf{r}|$ Decoder fails to recover $ \mathbf{v}$ if $\mathbf{r}$ is received$\}$.\\
Then $\rm{Pr}[\rm{error}|\mathbf{v}]=\sum_{\mathbf{r}\in\mathcal{B}(\mathbf{v})}\rm{Pr}[\mathbf{r}|\mathbf{v}]$.

We use the symmetry condition for fields in \cite{Liu-nonbinary-journal} (Definition 48, \cite{Liu-nonbinary-journal}). Let $\mathbf{r}$ and $\mathbf{r}^0$ are received vectors when codeword $\mathbf{v}$ and the all-zeros codeword $\mathbf{0}$ are transmitted over the symmetrical channel. This definition indicates a one-to-one mapping from the received vector $\mathbf{r}$ to a vector $\mathbf{r}^0$ such that the following two statements hold:
\begin{enumerate}[(a)]
\item Pr$[\mathbf{r}|\mathbf{v}]$=Pr$[\mathbf{r}^0|\mathbf{0}]$,
\item $\mathbf{r}\in \mathcal{B}(\mathbf{v})$ if and only if $\mathbf{r}^0\in\mathcal{B}(\mathbf{0})$.
\end{enumerate}
Statement (a) is directly proved according to the definition of the symmetry condition. In the following, we show that statement (b) also holds once we proved that
\begin{equation*}
\begin{aligned}
\rm{Pr}[error|\mathbf{v}]&=\sum_{\mathbf{r}\in\mathcal{B}(\mathbf{v})}\rm{Pr}[\mathbf{r}|\mathbf{v}]\\
&=\sum_{\mathbf{r}^0\in\mathcal{B}(\mathbf{0})}\rm{Pr}[\mathbf{r}^0|\mathbf{0}]=\rm{Pr}[error|\mathbf{0}].
\end{aligned}
\end{equation*}

2) \emph{Proof of statement (b):} We define the concept of relative vector for the bit embedding.
\begin{definition}\label{relative vector}
Define $\mathrm{R}_{\mathbf{v}}(\pmb{\beta})$ is the relative vector of $\pmb\beta$ with respect to the binary vector $\mathbf{v}$ if it satisfies
\begin{equation}
\left(\mathrm{R}_{\mathbf{v}}(\pmb{\beta})\right)_{i}=\left\{\begin{array}{ll}
\beta_{i}, & v_{i}=0, \\
1-\beta_{i}, & v_{i}=1.
\end{array}\right.
\end{equation}
Vector $\mathbf{v}=[\mathbf{x};\mathbf{u}]$, where $\mathbf{x}$ is the bit embedding result of the transmitted non-binary LDPC codeword $\mathbf{c}$ and $\mathbf{u}$ is the corresponding binary auxiliary variables.
\end{definition}

 In \eqref{LP-relax_aux}, an auxiliary variable $\mathbf{e}_1$ is introduced to change the inequality constraint \eqref{LP-relax-b} to the equality constraint \eqref{LP-relax-aux-b}. Moreover, observing \eqref{four_inequalities}, four auxiliary variables are required for one three-variables inequalities set. Hence we define the following mapping operators $\mathcal{M}_{\mathbf{v}_\tau}(\cdot)$ and $\mathcal{T}_{\mathbf{v}}(\cdot)$ for the introduced auxiliary variables. These two operators can also be applied to the Lagrangian multiplier $\mathbf{y}_1$. Moreover, we also define a mapping operator $\mathcal{K}_{\mathbf{v}}(\mathbf{y}_2)$ for $\mathbf{y}_2$.
 \begin{definition}\label{def-M-T-operator}
 Let $\mathbf{e}_{1,\tau}=[e_{1,{\tau}_1},e_{1,{\tau}_2},e_{1,{\tau}_3},e_{1,{\tau}_4}]$ be the auxiliary variables which change the inequality constraints corresponding to the $\tau$th three-variables parity-check equation to the equality constraints. Denote $\mathbf{v}_\tau=\mathbf{Q}_\tau\mathbf{v}, \tau=1,\cdots,\Gamma_c$, as the corresponding binary variables. Then we define the following mapping operator for $\mathbf{e}_{1,\tau}$
 \begin{equation}
 \mathcal{M}_{\mathbf{v}_{\tau}}\left(\mathbf{e}_{1,\tau}\right)\!=\!\left\{\begin{array}{llll}
{\left[\begin{array}{llll}
e_{1,\tau_{2}}\!\!\!\!\! & e_{1,\tau_{1}}\!\!\!\!\! & e_{1,\tau_{4}}\!\!\!\!\! & e_{1,\tau_{3}}
\end{array}\right]^{T},} &\!\!\!\! \mathbf{v}_{\tau}\!=\! \left[\!\!\begin{array}{lll}
1\!\!\! &\!\!\! 1 &\!\!\! 0
\end{array}\!\!\right]^{T}, \\
{\left[\begin{array}{llll}
e_{1,\tau_{3}}\!\!\!\!\! & e_{1,\tau_{4}}\!\!\!\!\! & e_{1,\tau_{1}}\! \!\!\!\!& e_{1,\tau_{2}}
\end{array}\right]^{T},} &\!\!\!\! \mathbf{v}_{\tau}\!=\!\left[\!\!\begin{array}{lll}
1\!\!\! &\!\!\! 0 & \!\!\! 1
\end{array}\!\!\right]^{T} ,\\
{\left[\begin{array}{llll}
e_{1,\tau_{4}} \!\!\!\!\! & e_{1,\tau_{3}} \!\!\!\!\! & e_{1,\tau_{2}}\!\!\!\!\! & e_{1,\tau_{1}}
\end{array}\right]^{T},} &\!\!\!\! \mathbf{v}_{\tau}\!=\!\left[\!\!\begin{array}{lll}
0 \!\!\!&\!\!\! 1 &\!\!\! 1
\end{array}\!\!\right]^{T}, \\
\mathbf{e}_{1,\tau}, &\!\!\!\! \mathbf{v}_{\tau}\!=\!\left[\!\!\begin{array}{lll}
0\!\!\! &\!\!\! 0 &\!\!\! 0
\end{array}\!\!\right]^{T}.
\end{array}\right.
 \end{equation}
 Furthermore, we define
 \[\mathcal{T}_{\mathbf{v}}(\mathbf{e}_1)\!=\!\left[\mathcal{M}_{\mathbf{v}_{1}}\left(\mathbf{e}_{1,1}\right) ; \cdots ; \mathcal{M}_{\mathbf{v}_{\tau}}\left(\mathbf{e}_{1,\tau}\right) ; \cdots ; \mathcal{M}_{\mathbf{v}_{\Gamma_{c}}}\left(\mathbf{e}_{1,\Gamma_{c}}\right)\right]^{T}.\]

 Moreover, we define the following mapping operator for $\mathbf{y}_2$
 \begin{equation}
\left(\mathcal{K}_{\mathbf{v}}(\mathbf{y}_2)\right)_{i}=\left\{\begin{array}{ll}
y_{2,i}, & v_{i}=0, \\
-y_{2,i}, & v_{i}=1.
\end{array}\right.
\end{equation}
 \end{definition}

 \begin{lemma}\label{lemma-symmetry-v0k-v0k+1}
 In Algorithm \ref{ADMM-QP}, let $\{\mathbf{v}^k,\mathbf{e}_1^k,\mathbf{e}_2^k,\mathbf{y}_1^k,\mathbf{y}_2^k\}$ be the vectors in the $k$th iteration when decoding $\mathbf{r}$. Let $\{\mathbf{v}^{0,k},\mathbf{e}_1^{0,k},\mathbf{e}_2^{0,k},\mathbf{y}_1^{0,k},\mathbf{y}_2^{0,k}\}$ be the vectors after the $k$th iteration when decoding $\mathbf{r}^0$. If $\mathbf{v}^{0,k}=\rm{R}_{\mathbf{v}}(\mathbf{v}^{k})$, $\mathbf{e}_1^{0,k}=\mathcal{T}_{\mathbf{v}}(\mathbf{e}_1^{k})$, $\mathbf{e}_2^{0,k}=\rm{R}_{\mathbf{v}}(\mathbf{e}_2^{k})$, $\mathbf{y}_1^{0,k}=\mathcal{T}_{\mathbf{v}}(\mathbf{y}_1^{k})$, $\mathbf{y}_2^{0,k}=\mathcal{K}_{\mathbf{v}}(\mathbf{y}_2^{k})$, then $\mathbf{v}^{0,k+1}=\rm{R}_{\mathbf{v}}(\mathbf{v}^{k+1})$, $\mathbf{e}_1^{0,k+1}=\mathcal{T}_{\mathbf{v}}(\mathbf{e}_1^{k+1})$, $\mathbf{e}_2^{0,k+1}=\rm{R}_{\mathbf{v}}(\mathbf{e}_2^{k+1})$, $\mathbf{y}_1^{0,k+1}=\mathcal{T}_{\mathbf{v}}(\mathbf{y}_1^{k+1})$, $\mathbf{y}_2^{0,k+1}=\mathcal{K}_{\mathbf{v}}(\mathbf{y}_2^{k+1})$.
 \end{lemma}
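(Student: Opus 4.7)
The plan is to advance the stated symmetry hypothesis through one ADMM iteration by treating each of the five updates \eqref{v-update-parallel}, \eqref{e_update_parallel}, \eqref{e2_update_parallel}, \eqref{y1-updata-parallel}, \eqref{y2-updata-parallel} of Algorithm~\ref{ADMM-QP} in order. Before the case analysis I will collect three ingredients that do the real work. First, the channel-symmetry condition translates into an LLR flip at every information coordinate, $q_i^0 = (1-2v_i)q_i$, trivially extended to the auxiliary coordinates (where $q_i = 0$); consequently $\phi_i^0 = \phi_i$ when $v_i=0$ and $\phi_i^0 = -\phi_i + \alpha/\mu$ when $v_i=1$. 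Second, writing $P_{\mathbf{v}_\tau}$ for the permutation matrix that realises $\mathcal{M}_{\mathbf{v}_\tau}$, a direct inspection of each of the four cases in Definition~\ref{def-M-T-operator} with the explicit $\mathbf{T}$ from \eqref{w T matrix} yields the conjugation identity $\mathbf{T}^T P_{\mathbf{v}_\tau} = \mathrm{diag}(1-2\mathbf{v}_\tau)\,\mathbf{T}^T$, equivalently $\mathbf{b}_\tau - \mathbf{T}\mathbf{v}_\tau^0 = P_{\mathbf{v}_\tau}(\mathbf{b}_\tau - \mathbf{T}\mathbf{v}_\tau)$ for every binary $\mathbf{v}_\tau$ satisfying the three-variable parity check. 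Third, because each three-variable check contributes exactly four non-zero $\pm 1$ entries to each of its three columns of $\mathbf{A}$, one has $s_i = 4 n_i$, and together with $\mathbf{T}^T\mathbf{b}_\tau = [2,2,2]^T$ this yields $(\mathbf{A}^T\mathbf{b})_i = 2 n_i$, where $n_i$ denotes the number of three-variable checks containing variable $i$.

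Given these ingredients I will propagate through the updates in sequence. For \eqref{v-update-parallel}, I split on $v_i$. Combining the conjugation identity with the inductive hypothesis yields $(\mathbf{A}^T\mathbf{e}_1^{0,k})_i = (1-2v_i)(\mathbf{A}^T\mathbf{e}_1^k)_i$ and analogously for $\mathbf{y}_1$. The case $v_i=0$ is then immediate because every right-hand-side term of \eqref{v-update-parallel} is pointwise invariant. The case $v_i=1$ is tighter: after the substitutions $e_{2,i}^{0,k}=1-e_{2,i}^k$, $y_{2,i}^{0,k}=-y_{2,i}^k$, $\phi_i^0=-\phi_i+\alpha/\mu$ and the sign flip on $(\mathbf{A}^T(\cdot))_i$, the target relation $v_i^{0,k+1}=1-v_i^{k+1}$ reduces to the single arithmetic identity $\theta_i = s_i + 1 - \alpha/\mu$, which is nothing but the definition of $\theta_i$ rewritten via $s_i = 4n_i$. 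For \eqref{e_update_parallel}, the conjugation identity gives $\mathbf{b}_\tau - \mathbf{T}\mathbf{v}_\tau^{0,k+1} = P_{\mathbf{v}_\tau}(\mathbf{b}_\tau - \mathbf{T}\mathbf{v}_\tau^{k+1})$, and combined with the induction on $\mathbf{y}_1$ the argument of the projection is permuted by $P_{\mathbf{v}_\tau}$; since $\underset{[0,+\infty]}{\Pi}$ is coordinate-wise it commutes with permutation, yielding $\mathbf{e}_1^{0,k+1} = \mathcal{T}_{\mathbf{v}}(\mathbf{e}_1^{k+1})$. For \eqref{e2_update_parallel}, the scalar identity $\underset{[0,1]}{\Pi}(1-x) = 1 - \underset{[0,1]}{\Pi}(x)$ applied to $v_i^{0,k+1} + y_{2,i}^{0,k}/\mu = 1 - (v_i^{k+1} + y_{2,i}^k/\mu)$ (valid when $v_i=1$) delivers the reflected output. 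The two multiplier updates are analogous: for \eqref{y1-updata-parallel}, the intermediate $\tilde{\mathbf{e}}_{1,\tau}^{0,k+1}$ equals $P_{\mathbf{v}_\tau}\tilde{\mathbf{e}}_{1,\tau}^{k+1}$, so the coordinate-wise piecewise rule commutes with the permutation; for \eqref{y2-updata-parallel}, a short three-way case analysis on the branches applied to $1-\tilde{e}_{2,i}^{k+1}$ shows each branch maps to the sign-negated branch, as required by $\mathcal{K}_{\mathbf{v}}$.

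The main obstacle I expect is the $v_i=1$ branch of the $\mathbf{v}$-update: it is the only update in which independently-defined pieces of the algorithm must conspire arithmetically rather than structurally, and the closing relation ultimately rests on the combinatorial fact that each column of $\mathbf{A}$ has exactly four non-zero entries per participating parity check. All remaining updates reduce to routine commutations of coordinate-wise operations with either the permutation $P_{\mathbf{v}_\tau}$ or the reflection $x\mapsto 1-x$, so they present no essential difficulty once the three preliminary identities are in place.
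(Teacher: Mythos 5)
Your proposal is correct and follows essentially the same route as the paper's proof: an induction through the five updates with a case split on $v_i\in\{0,1\}$, the identity $\mathbf{a}_i^T\mathbf{b}=2d_i$, $s_i=4d_i$ closing the $v_i=1$ branch of the $\mathbf{v}$-update, and commutation of the coordinate-wise projections with the reflection $x\mapsto 1-x$ and with the permutation realizing $\mathcal{M}_{\mathbf{v}_\tau}$. Your conjugation identity $\mathbf{T}^T P_{\mathbf{v}_\tau}=\mathrm{diag}(1-2\mathbf{v}_\tau)\,\mathbf{T}^T$ is exactly the paper's Corollary~\ref{tlMv(e1)} stated in matrix form, so the difference is purely presentational.
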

 \begin{proof}
 See Appendix \ref{proof of symmetry-v0k-v0k+1}.
 \end{proof}

 \begin{lemma}\label{lemma-v0=Rv}
 Let $\hat{\mathbf{v}}$ and $\hat{\mathbf{v}}^0$ denote the output of the Algorithm \ref{ADMM-QP} when $\mathbf{r}$ and $\mathbf{r}^0$ are received respectively. Then there exists $\hat{\mathbf{v}}^0=\rm{R}_{\mathbf{v}}(\hat{\mathbf{v}})$.
 \end{lemma}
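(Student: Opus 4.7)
The plan is to apply Lemma \ref{lemma-symmetry-v0k-v0k+1} inductively. Running Algorithm \ref{ADMM-QP} in parallel on the two inputs $\mathbf{r}$ and $\mathbf{r}^0$, I will prove by induction on the iteration index $k$ that the five symmetry relations
\[
\mathbf{v}^{0,k}=R_{\mathbf{v}}(\mathbf{v}^{k}),\ \mathbf{e}_1^{0,k}=\mathcal{T}_{\mathbf{v}}(\mathbf{e}_1^{k}),\ \mathbf{e}_2^{0,k}=R_{\mathbf{v}}(\mathbf{e}_2^{k}),\ \mathbf{y}_1^{0,k}=\mathcal{T}_{\mathbf{v}}(\mathbf{y}_1^{k}),\ \mathbf{y}_2^{0,k}=\mathcal{K}_{\mathbf{v}}(\mathbf{y}_2^{k})
\]
persist through every iteration. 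Since Lemma \ref{lemma-symmetry-v0k-v0k+1} supplies exactly the inductive step, the remaining task is to verify the base case; evaluating the first of these relations at the common stopping index will then deliver $\hat{\mathbf{v}}^0=R_{\mathbf{v}}(\hat{\mathbf{v}})$.

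For the base case I exploit the fact that Algorithm \ref{ADMM-QP} launches both runs from identical all-zero $\mathbf{e}_1,\mathbf{e}_2,\mathbf{y}_1,\mathbf{y}_2$, so the two first-iteration outputs can differ only through the log-likelihood vector $\pmb{\gamma}$ entering $\phi_i$. The channel symmetry condition of \cite{Flanagan} gives a bijection $\mathbf{r}\leftrightarrow\mathbf{r}^0$ preserving conditional probabilities. Pushed through the \emph{bit embedding} of Definition \ref{def-bit-embed}, this translates into a coordinate-wise rule between $\pmb{\gamma}$ and $\pmb{\gamma}^0$: at bit positions with $v_i=0$ the LLRs coincide, at bit positions with $v_i=1$ they negate each other, and at auxiliary coordinates both are zero. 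Substituting this rule into \eqref{v-update-parallel} and then into the slack and dual updates \eqref{e_update_parallel}, \eqref{e2_update_parallel}, \eqref{y1-updata-parallel}, \eqref{y2-updata-parallel}, and using the block structure of $\mathbf{A}$ and $\mathbf{T}$ together with the fact that the permutation action of $\mathcal{T}_{\mathbf{v}}$ stays within each three-variable block, I will verify the five relations at the first iteration by direct algebraic matching.

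With the base case in place, Lemma \ref{lemma-symmetry-v0k-v0k+1} propagates the relations to every subsequent iteration. The stopping criterion of Algorithm \ref{ADMM-QP} depends only on the residual norms $\|\mathbf{A}\mathbf{v}+\mathbf{e}_1-\mathbf{b}\|_2^2$ and $\|\mathbf{v}-\mathbf{e}_2\|_2^2$, both of which are invariant under $R_{\mathbf{v}}$ and $\mathcal{T}_{\mathbf{v}}$ since those operators only permute or complement coordinates; hence the two runs halt at the same iteration index, and reading off the $\mathbf{v}$-relation at that index completes the proof. The main obstacle will be the base case: reconciling the $\mathbf{v}$-update \eqref{v-update-parallel} with $R_{\mathbf{v}}$ at the coordinates where $v_i=1$ requires a careful balance among the constant $\mathbf{a}_i^T\mathbf{b}$, the column norm $s_i$ sitting inside $\theta_i$, and the centered quadratic penalty at $0.5$. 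This in turn relies on the specific $\pm 1$ column structure of $\mathbf{A}$ inherited from $\mathbf{T}$, and on the fact that the non-convex penalty is centered at the midpoint of $[0,1]$ (so that $R_{\mathbf{v}}$ commutes with it); once those algebraic ingredients are assembled, the remaining four base-case identities drop out by matching corresponding entries on the two sides.
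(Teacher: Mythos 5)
Your overall architecture --- Lemma \ref{lemma-symmetry-v0k-v0k+1} as the inductive step, followed by invariance of the two residual norms so that both runs halt at the same index --- is exactly the paper's. The gap is in your base case. You propose to launch both runs from \emph{identical} all-zero vectors and to absorb the entire first-iteration difference into $\pmb\gamma$ versus $\pmb\gamma^0$. That verification fails at every coordinate with $v_i=1$: with $\mathbf{e}_1^0=\mathbf{y}_1^0=\mathbf{y}_2^0=\mathbf{0}$ and $e_{2,i}^0=0$ in \emph{both} runs, \eqref{v-update-parallel} gives $v_i^{0,1}=(2d_i+\tfrac{2q_i-\alpha}{2\mu})/\theta_i$ (using $\mathbf{a}_i^T\mathbf{b}=2d_i$ and $q_i^0=-q_i$), whereas $1-v_i^{1}=(2d_i+1+\tfrac{2q_i-\alpha}{2\mu})/\theta_i$ because $\theta_i-\mathbf{a}_i^T\mathbf{b}=2d_i+1-\alpha/\mu$. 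The two quantities differ by exactly $1/\theta_i$, and that discrepancy is precisely the missing contribution of $e_{2,i}^{0,0}$: the hypothesis of Lemma \ref{lemma-symmetry-v0k-v0k+1} at $k=0$ requires $\mathbf{e}_2^{0,0}=\mathrm{R}_{\mathbf{v}}(\mathbf{e}_2^{0})$, and $\mathrm{R}_{\mathbf{v}}(\mathbf{0})$ is \emph{not} the zero vector --- it equals $1$ at every coordinate with $v_i=1$. So the ``careful balance'' you anticipate among $\mathbf{a}_i^T\mathbf{b}$, $s_i$ and the penalty center cannot be made to close under identical initializations.

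The missing idea, which is how the paper's proof proceeds, is that the two runs must be initialized with \emph{corresponding} rather than identical values: one sets $\mathbf{e}_1^{0,0}=\mathcal{T}_{\mathbf{v}}(\mathbf{e}_1^{0})$, $\mathbf{e}_2^{0,0}=\mathrm{R}_{\mathbf{v}}(\mathbf{e}_2^{0})$, $\mathbf{y}_1^{0,0}=\mathcal{T}_{\mathbf{v}}(\mathbf{y}_1^{0})$, $\mathbf{y}_2^{0,0}=\mathcal{K}_{\mathbf{v}}(\mathbf{y}_2^{0})$, so that the hypothesis of Lemma \ref{lemma-symmetry-v0k-v0k+1} already holds at $k=0$ and no separate first-iteration computation is needed. (For the permutation $\mathcal{T}_{\mathbf{v}}$ and the sign map $\mathcal{K}_{\mathbf{v}}$ the all-zero vector is a fixed point, but for $\mathbf{e}_2$ the zero-codeword run must start from $\mathrm{R}_{\mathbf{v}}(\mathbf{0})\neq\mathbf{0}$.) Your remaining steps --- invariance of $\|\mathbf{A}\mathbf{v}^k+\mathbf{e}_1^k-\mathbf{b}\|_2^2$ and $\|\mathbf{v}^k-\mathbf{e}_2^k\|_2^2$ because each block residual $\mathbf{T}\mathbf{v}_\tau^{k}+\mathbf{e}_{1,\tau}^{k}-\mathbf{w}$ is merely permuted, hence simultaneous termination and the conclusion $\hat{\mathbf{v}}^0=\mathrm{R}_{\mathbf{v}}(\hat{\mathbf{v}})$ --- coincide with the paper's and are sound once the base case is repaired as above.
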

 \begin{proof}
 Let $\{\mathbf{e}_1^{0,0},\mathbf{e}_2^{0,0},\mathbf{y}_1^{0,0},\mathbf{y}_2^{0,0}\}$ and $\{\mathbf{e}_1^{0},\mathbf{e}_2^{0},\mathbf{y}_1^{0},\mathbf{y}_2^{0}\}$ be the initial values when decode the all-zeros codeword and general codeword $\mathbf{c}$ using Algorithm \ref{ADMM-QP}. We setting them to satisfy $\mathbf{e}_1^{0,0}=\mathcal{T}_{\mathbf{v}}(\mathbf{e}_1^{0})$, $\mathbf{e}_2^{0,0}=\rm{R}_{\mathbf{v}}(\mathbf{e}_2^{0})$, $\mathbf{y}_1^{0,0}=\mathcal{T}_{\mathbf{v}}(\mathbf{y}_1^{0})$, $\mathbf{y}_2^{0,0}=\mathcal{K}_{\mathbf{v}}(\mathbf{y}_2^{0})$. Then in the first iteration, by Lemma \ref{lemma-symmetry-v0k-v0k+1}, we have $\mathbf{v}^{0,1}=\rm{R}_{\mathbf{v}}(\mathbf{v}^1)$, $\mathbf{{e}}_1^{0,1}=\mathcal{T}_{\mathbf{v}}(\mathbf{e}_1^{1})$, $\mathbf{e}_2^{0,1}=\rm{R}_{\mathbf{v}}(\mathbf{e}_2^{1})$, $\mathbf{y}_1^{0,1}=\mathcal{T}_{\mathbf{v}}(\mathbf{y}_1^{1})$, $\mathbf{y}_2^{0,1}=\mathcal{K}_{\mathbf{v}}(\mathbf{y}_2^{1})$. Continuing with the iteration, we can get $\mathbf{v}^{0,k}=\rm{R}_{\mathbf{v}}(\mathbf{v}^k)$, $\mathbf{{e}}_1^{0,k}=\mathcal{T}_{\mathbf{v}}(\mathbf{e}_1^{k})$, $\mathbf{e}_2^{0,k}=\rm{R}_{\mathbf{v}}(\mathbf{e}_2^{k})$, $\mathbf{y}_1^{0,k}=\mathcal{T}_{\mathbf{v}}(\mathbf{y}_1^{k})$, $\mathbf{y}_2^{0,k}=\mathcal{K}_{\mathbf{v}}(\mathbf{y}_2^{k})$.

 For the termination condition in Algorithm \ref{ADMM-QP}, we have the following derivations
 \begin{equation}
 \begin{aligned}
 \|\mathbf{Av}^k+\mathbf{e}_1^k-\mathbf{b}\|^2_2&=\sum^{\Gamma_c}_{\tau=1}\|\mathbf{Tv}^{k}_{\tau}+\mathbf{e}_{1,\tau}^{k}-\mathbf{w}\|^2_2\\
 &\overset{a}{=}\sum^{\Gamma_c}_{\tau=1}\|\mathbf{Tv}^{0,k}_{\tau}+\mathbf{e}_{1,\tau}^{0,k}-\mathbf{w}\|^2_2\\
 &=\|\mathbf{Av}^{0,k}+\mathbf{e}_1^{0,k}-\mathbf{b}\|^2_2,
 \end{aligned}
 \end{equation}
 \begin{equation}
 \|\mathbf{v}^k-\mathbf{e}_2^k\|^2_2\overset{b}{=}\|\mathbf{v}^{0,k}-\mathbf{e}_2^{0,k}\|^2_2
 \end{equation}
where the equalities "a" and "b" hold since $\mathbf{v}^{0,k}=\rm{R}_{\mathbf{v}}(\mathbf{v}^k)$, $\mathbf{{e}}_1^{0,k}=\mathcal{T}_{\mathbf{v}}(\mathbf{e}_1^{k})$, $\mathbf{e}_2^{0,k}=\mathcal{K}_{\mathbf{v}}(\mathbf{e}_2^{k})$, and we can find that the vector $\mathbf{Tv}^{k}_{\tau}+\mathbf{e}_{1,\tau}^{k}-\mathbf{w}$ can be obtained from the permutation of the vector $\mathbf{Tv}^{0,k}_{\tau}+\mathbf{e}_{1,\tau}^{0,k}-\mathbf{w}$. This indicates that both of the decoding procedures in Algorithm \ref{ADMM-QP} for $\mathbf{r}$ and $\mathbf{r}^0$ should always be terminated at the same time. Hence we conclude $\mathbf{v}^0=\rm{R}_{\mathbf{v}}(\mathbf{v})$.
 \end{proof}

 Due to Lemma \ref{lemma-v0=Rv}, if the decoder recovers a codeword $\mathbf{c}$ from $\mathbf{r}$, it indicates that the decoded vector $\hat{\mathbf{v}}$ is the embedding of $\mathbf{c}$ based on Definition \ref{def-bit-embed}. Therefore by Definition \ref{relative vector}, $\rm{R}_{\mathbf{v}}(\hat{\mathbf{v}})$ is the embedding of $\mathbf{0}$ from $\mathbf{r}^0$. On the flip side, if the decoder fails to recover codeword $\mathbf{c}$ from $\mathbf{r}$, then $\rm{R}_{\mathbf{v}}(\hat{\mathbf{v}})$ is not a integral vector. This indicates that the decoder cannot recover $\mathbf{0}$ from $\mathbf{r}^0$. Combine the two augments, we can conclude that the decoder can recover $\mathbf{c}$ from $\mathbf{r}$ if and only if it can recover $\mathbf{0}$ from $\mathbf{r}^0$, which completes the proof.

\section{Proof of lemma \ref{lemma-symmetry-v0k-v0k+1}}\label{proof of symmetry-v0k-v0k+1}
To prove Lemma \ref{lemma-symmetry-v0k-v0k+1}, we provide the following corollary based on Definition \ref{def-M-T-operator}.
\begin{corollary}\label{tlMv(e1)}
The $\tau$th three-variables parity-check equation $\mathbf{e}_{1,\tau}$ and its mapping operator $\mathcal{M}_{\mathbf{v}_\tau}(\mathbf{e}_{1,\tau})$ have the following property
\begin{equation}\label{tlMv=tle1}
\mathbf{t}_{\ell}^{T} \mathcal{M}_{\mathbf{v}_{\tau}}\left(\mathbf{e}_{1,\tau}\right)=\left\{\begin{array}{cl}
\mathbf{t}_{\ell}^{T} \mathbf{e}_{1,\tau}, & v_{\tau_{\ell}}=0 \\
-\mathbf{t}_{\ell}^{T} \mathbf{e}_{1,\tau}, & v_{\tau_{\ell}}=1
\end{array}\right.{}
\end{equation}
where $\mathbf{t}_\ell,\ell=1,2,3$, denotes the $\ell$th column of matrix $\mathbf{T}$.
\end{corollary}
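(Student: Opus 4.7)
The plan is to verify \eqref{tlMv=tle1} directly by case analysis on the four valid configurations of $\mathbf{v}_\tau$. Because the $\tau$th three-variables parity-check equation forces $v_{\tau_1}+v_{\tau_2}+v_{\tau_3}=0\pmod 2$, the only possibilities are $\mathbf{v}_\tau\in\{[0,0,0]^T,[1,1,0]^T,[1,0,1]^T,[0,1,1]^T\}$, which are exactly the four cases enumerated in Definition \ref{def-M-T-operator}; no other binary configurations arise, so this case analysis is exhaustive.

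First I would recast the claim as a matrix identity. For each valid $\mathbf{v}_\tau$, the mapping $\mathcal{M}_{\mathbf{v}_\tau}(\cdot)$ is simply left-multiplication by a $4\times 4$ permutation matrix $\mathbf{P}_{\mathbf{v}_\tau}$ read off directly from Definition \ref{def-M-T-operator}. Setting $\mathbf{D}_{\mathbf{v}_\tau}=\mathrm{diag}\bigl((-1)^{v_{\tau_1}},(-1)^{v_{\tau_2}},(-1)^{v_{\tau_3}}\bigr)$, the claim \eqref{tlMv=tle1} holding for every $\mathbf{e}_{1,\tau}$ is equivalent to the vector-free identity
\[
\mathbf{T}^T \mathbf{P}_{\mathbf{v}_\tau} \;=\; \mathbf{D}_{\mathbf{v}_\tau}\,\mathbf{T}^T,
\]
a purely structural $3\times 4$ matrix relation that only involves the fixed $\mathbf{T}$ from \eqref{w T matrix}.

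Next I would dispatch the four cases in turn. The case $\mathbf{v}_\tau=[0,0,0]^T$ is immediate since both $\mathbf{P}_{\mathbf{v}_\tau}$ and $\mathbf{D}_{\mathbf{v}_\tau}$ are the identity. For each of the three remaining configurations one writes $\mathbf{P}_{\mathbf{v}_\tau}$ out explicitly and compares the product $\mathbf{T}^T\mathbf{P}_{\mathbf{v}_\tau}$ with $\mathbf{D}_{\mathbf{v}_\tau}\mathbf{T}^T$ column-by-column; each comparison is three signed entries long, so the whole verification amounts to roughly a dozen elementary sign checks.

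The main obstacle is purely bookkeeping, namely keeping the subscripts $\tau_\ell$ straight through the permutation, rather than anything conceptually deep. The reason the identity holds is that the permutations in Definition \ref{def-M-T-operator} are precisely the bit-flip symmetries of the three-variables parity polytope whose facet normals are the rows of $\mathbf{T}$: flipping an incident bit $v_{\tau_\ell}$ reverses the sign of the corresponding facet inequality, which is exactly the factor $(-1)^{v_{\tau_\ell}}$ appearing on the right-hand side of \eqref{tlMv=tle1}.
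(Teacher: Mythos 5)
Your proposal is correct and follows essentially the same route as the paper: a direct, exhaustive verification over the four admissible configurations of $\mathbf{v}_\tau$ (the paper writes out the signed inner products $\mathbf{t}_\ell^T\mathcal{M}_{\mathbf{v}_\tau}(\mathbf{e}_{1,\tau})$ explicitly for $\mathbf{v}_\tau=[1,1,0]^T$ and appeals to symmetry for the remaining cases). Your repackaging of the check as the matrix identity $\mathbf{T}^T\mathbf{P}_{\mathbf{v}_\tau}=\mathbf{D}_{\mathbf{v}_\tau}\mathbf{T}^T$ is a tidy organizational device, and your observation that the parity-check equation is what restricts $\mathbf{v}_\tau$ to the four listed cases makes explicit a point the paper leaves implicit, but the substance of the argument is identical.
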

\begin{proof}
Based on the Definition \ref{def-M-T-operator}, there are four cases for the relationship between $\mathbf{e}_{1,\tau}$ and the mapping operator $\mathcal{M}_{\mathbf{v}_{\tau}}(\mathbf{e}_{1,\tau})$. When $\mathbf{v}_{\tau}=[1,1,0]^T$, $\mathcal{M}_{\mathbf{v}_{\tau}}(\mathbf{e}_{1,\tau})=[e_{1,\tau_{2}},e_{1,\tau_{1}},e_{1,\tau_{4}},e_{1,\tau_{3}}]^T$. Hence, when $\mathbf{v}_{\tau_1}=1$, $\mathbf{v}_{\tau_2}=1$ and $\mathbf{v}_{\tau_3}=0$ respectively, we can obtain the following equation
\[\mathbf{t}^T_1{M}_{\mathbf{v}_{\tau}}(\mathbf{e}_{1,\tau})=-e_{1,\tau_{1}}+e_{1,\tau_{2}}+e_{1,\tau_{3}}-e_{1\tau_{4}}=-\mathbf{t}^T_1\mathbf{e}_{1,\tau},\]
\[\mathbf{t}^T_2{M}_{\mathbf{v}_{\tau}}(\mathbf{e}_{1,\tau})=e_{1,\tau_{1}}-e_{1,\tau_{2}}+e_{1,\tau_{3}}-e_{1\tau_{4}}=-\mathbf{t}^T_2\mathbf{e}_{1,\tau},\]
\[\mathbf{t}^T_3{M}_{\mathbf{v}_{\tau}}(\mathbf{e}_{1,\tau})=-e_{1,\tau_{1}}-e_{1,\tau_{2}}+e_{1,\tau_{3}}+e_{1\tau_{4}}=-\mathbf{t}^T_3\mathbf{e}_{1,\tau},\]
which can be further written as
\begin{equation}
\mathbf{t}_{\ell}^{T} \mathcal{M}_{\mathbf{v}_{\tau}}\left(\mathbf{e}_{1,\tau}\right)=\left\{\begin{aligned}
\mathbf{t}_{\ell}^{T} \mathbf{e}_{1,\tau}, &\ \ v_{\tau_{\ell}}=0, \\
-\mathbf{t}_{\ell}^{T} \mathbf{e}_{1,\tau}, &\ \ v_{\tau_{\ell}}=1,
\end{aligned}\right.
\end{equation}
where $\ell=1,2,3.$

Therefore, through similar deduction, it can be proved that \eqref{tlMv=tle1} also holds when $\mathbf{v}_{\tau}=[1,0,1]^T, [0,1,1]^T$ and $[0,0,0]^T$. This completes the proof.
\end{proof}

Now we are ready to prove Lemma $\ref{lemma-symmetry-v0k-v0k+1}$.
\begin{proof}
According to \eqref{v-update-parallel}, we have
\begin{equation}
\begin{aligned}
 v_i^{k+1}=\frac{1}{s_i-\frac{\alpha}{\mu}+1}&\left(\mathbf{a}_i^T(\mathbf{b}-\mathbf{e}_1^k-\frac{\mathbf{y}_1^k}{\mu})\right.\\
 &\left.+e_{2,i}^k-\frac{y_{2,i}^k}{\mu}-\frac{2q_i+\alpha}{2\mu}\right).
 \end{aligned}
\end{equation}
We first consider the case that $v_i=1$ is transmitted. Then we have
\begin{equation}\label{1-vik+1}
\begin{aligned}
 1-v_i^{k+1}\overset{a}{=}&\frac{1}{s_i-\frac{\alpha}{\mu}+1}\left(\mathbf{a}_i^T(\mathbf{b}+\mathbf{e}_1^k+\frac{\mathbf{y}_1^k}{\mu})\right.\\
 &\left.+1-e_{2,i}^k+\frac{y_{2,i}^k}{\mu}+\frac{2q_i-\alpha}{2\mu}\right)\\
 \overset{b}{=}&\frac{1}{s_i-\frac{\alpha}{\mu}+1}\left(\mathbf{a}_i^T(\mathbf{b}-\mathbf{e}_1^{0,k}-\frac{\mathbf{y}_1^{0,k}}{\mu})\right.\\
 &\left.+e_{2,i}^{0,k}-\frac{y_{2,i}^{0,k}}{\mu}-\frac{2q_i^0+\alpha}{2\mu}\right),\\
 \end{aligned}
\end{equation}
where $q_i^0\in\mathbf{q}^0=[\pmb\gamma^0;\mathbf{0}]$ and $\pmb\gamma^0$ is the LLR vector when the all-zeros codeword is transmitted.

In \eqref{1-vik+1}, the equality "a" holds since $\mathbf{a}_i^T\mathbf{b}=2d_i$ and $s_i=4d_i$, where $d_i$ denotes the nonzero numbers in the $i$th column of the parity-check matrix $[\hat{\mathbf{H}}_1;\cdots;\hat{\mathbf{H}}_m]$. The equality "b" holds if $q_i^0=-q_i$, $\mathbf{a}_i^T\mathbf{e}_1^k=-\mathbf{a}_i^T\mathbf{e}_1^{0,k}$, $\mathbf{a}_i^T\mathbf{y}_1^k=-\mathbf{a}_i^T\mathbf{y}_1^{0,k}$. It is obvious that $q_i^0=-q_i$ when $v_i=1$ since the channel is assumed to be symmetrical. And $e_{2,i}^{0,k}=\rm{R}_{\mathbf{v}}(e_{2,i}^{k})=1-e_{2,i}^{k}$, $y_{2,i}^{0,k}=\mathcal{K}_{\mathbf{v}}(y_{2,i}^k)=-y_{2,i}^k$ holds when $v_i=1$. Now we need to verify $\mathbf{a}_i^T\mathbf{e}_1^k=-\mathbf{a}_i^T\mathbf{e}_1^{0,k}$ and $\mathbf{a}_i^T\mathbf{y}_1^k=-\mathbf{a}_i^T\mathbf{y}_1^{0,k}$. Observing \eqref{A}, we can find that there are $\lfloor\mathbf{t}_\ell \rfloor$ sub-vectors same as $\mathbf{t}_\ell$ in $\mathbf{a}_i$, where $\lfloor \mathbf{t}_\ell\rfloor$ is the number of vector same as $\mathbf{t}_\ell$. Moreover, we denote $\mathbf{e}_{1,\kappa_\ell}$ as the length-4 auxiliary variables corresponding to the $\kappa_\ell$th sub-vectors which is the same as $\mathbf{t}_\ell$, $\kappa_\ell=1,\cdots,\lfloor \mathbf{t}_\ell\rfloor$. Based on the above augment, when $v_i=1$, we have
\begin{equation}\label{ae1k-ae10k}
\begin{aligned}
\mathbf{a}_{i}^{T} \mathbf{e}_1^{k} &=\sum_{\kappa_{1}=1}^{\left\lfloor\mathbf{t}_{1}\right\rfloor} \mathbf{t}_{1}^{T} \mathbf{e}_{1,\kappa_{1}}^{k}+\sum_{\kappa_{2}=1}^{\left\lfloor\mathbf{t}_{2}\right\rfloor} \mathbf{t}_{2}^{T} \mathbf{e}_{1,\kappa_{2}}^{k}+\sum_{\kappa_{3}=1}^{\left\lfloor\mathbf{t}_{3}\right\rfloor} \mathbf{t}_{3}^{T} \mathbf{e}_{1,\kappa_{3}}^{k} \\
&=-\sum_{\kappa_{1}=1}^{\left\lfloor\mathbf{t}_{1}\right\rfloor} \mathbf{t}_{1}^{T} \mathbf{e}_{1,\kappa_{1}}^{0,k}-\sum_{\kappa_{2}=1}^{\left\lfloor\mathbf{t}_{2}\right\rfloor} \mathbf{t}_{2}^{T} \mathbf{e}_{1,\kappa_{2}}^{0,k}-\sum_{\kappa_{3}=1}^{\left\lfloor\mathbf{t}_{3}\right\rfloor} \mathbf{t}_{3}^{T} \mathbf{e}_{1,\kappa_{3}}^{0,k} \\
&=-\mathbf{a}_{i}^{T} \mathbf{e}_1^{0,k}.
\end{aligned}
\end{equation}
In \eqref{ae1k-ae10k}, the second equality hods since $\mathbf{t}_\ell^T\mathbf{e}_{1,\kappa_\ell}^k=-\mathbf{t}_\ell^T\mathbf{e}_{1,\kappa_\ell}^{0,k},\ell=1,2,3$, when $v_i=1$ based on Corollary \ref{tlMv(e1)}. With similar derivation of \ref{tlMv(e1)}, we can also obtain that $\mathbf{a}_i^T\mathbf{y}_1^k=-\mathbf{a}_i^T\mathbf{y}_1^{0,k}$ when $v_i=1$. Hence, equation \eqref{1-vik+1} holds, which means that when $v_i=1$, we have
\begin{equation}\label{v0k+1=1-vk+1}
v_i^{0,k+1}=1-v_i^{k+1}
\end{equation}

For the other case when $v_i=0$ is transmitted over the channel, we have
\begin{equation}\label{vk+1=v0k+1}
\begin{aligned}
 v_i^{k+1}=\frac{1}{s_i-\frac{\alpha}{\mu}+1}&\left(\mathbf{a}_i^T(\mathbf{b}-\mathbf{e}_1^{0,k}-\frac{\mathbf{y}_1^{0,k}}{\mu})\right.\\
 &\left.+e_{2,i}^{0,k}-\frac{y_{2,i}^{0,k}}{\mu}-\frac{2q_i^0+\alpha}{2\mu}\right),
 \end{aligned}
\end{equation}
where the equality holds since $q^0_i=q_i$ when $v_i=0$ under the symmetrical channel. And $e_{2,i}^{0,k}=\rm{R}_{\mathbf{v}}(e_{2,i}^{k})=e_{2,i}^{k}$, $y_{2,i}^{0,k}=\mathcal{K}_{\mathbf{v}}(y_{2,i}^k)=y_{2,i}^k$ holds when $v_i=0$. Furthermore, base on the Corollary \ref{tlMv(e1)}, when $v_i=0$, we have
\begin{equation}\label{ae1k+ae10k}
\begin{aligned}
\mathbf{a}_{i}^{T} \mathbf{e}_1^{k} &=\sum_{\kappa_{1}=1}^{\left\lfloor\mathbf{t}_{1}\right\rfloor} \mathbf{t}_{1}^{T} \mathbf{e}_{1,\kappa_{1}}^{k}+\sum_{\kappa_{2}=1}^{\left\lfloor\mathbf{t}_{2}\right\rfloor} \mathbf{t}_{2}^{T} \mathbf{e}_{1,\kappa_{2}}^{k}+\sum_{\kappa_{3}=1}^{\left\lfloor\mathbf{t}_{3}\right\rfloor} \mathbf{t}_{3}^{T} \mathbf{e}_{1,\kappa_{3}}^{k} \\
&=\sum_{\kappa_{1}=1}^{\left\lfloor\mathbf{t}_{1}\right\rfloor} \mathbf{t}_{1}^{T} \mathbf{e}_{1,\kappa_{1}}^{0,k}+\sum_{\kappa_{2}=1}^{\left\lfloor\mathbf{t}_{2}\right\rfloor} \mathbf{t}_{2}^{T} \mathbf{e}_{1,\kappa_{2}}^{0,k}+\sum_{\kappa_{3}=1}^{\left\lfloor\mathbf{t}_{3}\right\rfloor} \mathbf{t}_{3}^{T} \mathbf{e}_{1,\kappa_{3}}^{0,k} \\
&=\mathbf{a}_{i}^{T} \mathbf{e}_1^{0,k}.
\end{aligned}
\end{equation}
Through a similar derivation of \eqref{ae1k+ae10k}, we can also get $\mathbf{a}_i^T\mathbf{y}_1^k=\mathbf{a}_i^T\mathbf{y}_1^{0,k}$ when $v_i=0$. Hence, equation \eqref{vk+1=v0k+1} holds, which means that when $v_i=0$,
\begin{equation}\label{v0k+1=vk+1}
v_i^{0,k+1}=v_i^{k+1}
\end{equation}
Combine \eqref{v0k+1=1-vk+1} and \eqref{v0k+1=vk+1}, we get
\begin{equation}\label{v0k+1=Rvk+1}
\mathbf{v}^{0,k+1}=\rm{R}_{\mathbf{v}}(\mathbf{v}^{k+1})
\end{equation}

Next, we verify that $\mathbf{e}_1^{0,k+1}=\mathcal{T}_{\mathbf{v}}(\mathbf{e}_1^{k+1})$. Firstly, we consider the relationship between $\mathbf{e}_{1,\tau}^{k+1}$ and $\mathbf{e}_{1,\tau}^{0,k+1}$. Since $\mathbf{y}_{1,\tau}^{0,k}=\mathcal{M}_{\mathbf{v}_\tau}(\mathbf{y}_{1,\tau}^{k})$, there exists $\mathbf{y}_{1,\tau}^{0,k}=[{y}_{1,\tau_2}^k,{y}_{1,\tau_1}^k,{y}_{1,\tau_4}^k,{y}_{1,\tau_3}^k]^T$ when $\mathbf{v}_\tau=[1,1,0]^T$. Hence, when $\mathbf{v}_\tau=[1,1,0]^T$, we have
\begin{equation}
\begin{aligned}
\mathbf{e}_{1,\tau}^{0,{k+1}} &=\prod_{[0,+\infty)}\left(\mathbf{w}-\mathbf{T} \mathbf{v}_{\tau}^{0,{k+1}}-\frac{\mathbf{y}_{1,\tau}^{0,k} }{\mu}\right) \\
&=\prod_{[0,+\infty)}\left(\mathbf{w}-\mathbf{T}\left[\begin{array}{c}
1-v_{\tau_{1}}^{k+1} \\
1-v_{\tau_{2}}^{k+1} \\
v_{\tau_{3}}^{k+1}
\end{array}\right]-\frac{1}{\mu}\left[\begin{array}{c}
y_{1,\tau_{2}}^{k} \\
y_{1,\tau_{1}}^{k} \\
y_{1,\tau_{4}}^{k} \\
y_{1,\tau_{3}}^{k}
\end{array}\right]\right) \\
&=\prod_{[0,+\infty)}\left[\begin{array}{c}
\quad v_{\tau_{1}}^{k+1}-v_{\tau_{2}}^{k+1}+v_{\tau_{3}}^{k+1}-\frac{y_{1,\tau_{2}}^{k}}{\mu} \\
-v_{\tau_{1}}^{k+1}+v_{\tau_{2}}^{k+1}+v_{\tau_{3}}^{k+1}-\frac{y_{1,\tau_{1}}^{k}}{\mu} \\
2-\left(v_{\tau_{1}}^{k+1}+v_{\tau_{2}}^{k+1}+v_{\tau_{3}}^{k+1}\right)-\frac{y_{1,\tau_{4}}^{k}}{\mu} \\
v_{\tau_{1}}^{k+1}+v_{\tau_{2}}^{k+1}-v_{\tau_{3}}^{k+1}-\frac{y_{1,\tau 3}^{k}}{\mu}
\end{array}\right]
\end{aligned}
\end{equation}
Then, $\mathbf{e}_{1,\tau}^{0,k+1}$ can be further derived as
\begin{equation}
\begin{aligned}
\mathbf{e}_{1,\tau}^{0,{k+1}} &=\prod_{[0,+\infty)}\left[\begin{array}{c}
w_{2}-\left(-v_{\tau_{1}}^{k+1}+v_{\tau_{2}}^{k+1}-v_{\tau_{3}}^{k+1}\right)-\frac{y_{1,\tau_{2}}^{k}}{\mu} \\
w_{1}-\left(v_{\tau_{1}}^{k+1}-v_{\tau_{2}}^{k+1}-v_{\tau_{3}}^{k+1}\right)-\frac{y_{1,\tau_{1}}^{k}}{\mu} \\
w_{4}-\left(v_{\tau_{1}}^{k+1}+v_{\tau_{2}}^{k+1}+v_{\tau_{3}}^{k+1}\right)-\frac{y_{1,\tau_{4}}^{k}}{\mu} \\
w_{3}-\left(-v_{\tau_{1}}^{k+1}-v_{\tau_{2}}^{k+1}+v_{\tau_{3}}^{k+1}\right)-\frac{y_{1,\tau_{3}}^{k}}{\mu}
\end{array}\right] \\
&=\left[e_{1,\tau_{2}}^{k+1}, e_{1,\tau_{1}}^{k+1}, e_{1,\tau_{4}}^{k+1}, e_{1,\tau_{3}}^{k+1}\right],
\end{aligned}
\end{equation}
i.e.,
\begin{equation}\label{e1tau0k+1=Me1tauk+1}
\mathbf{e}_{1,\tau}^{0,k+1}=\mathcal{M}_{\mathbf{v}_{\tau}}(\mathbf{e}_{1,\tau}^{k+1}).
\end{equation}
With a similar derivation, we can see \eqref{e1tau0k+1=Me1tauk+1} also holds when $\mathbf{v}_\tau=[1,0,1]^T, [0,1,1]^T, [0,0,0]^T$. Since $\mathbf{e}_1^{k+1}$ is cascaded by $\mathbf{e}_{1,\tau}^{k+1}, \tau=1,\cdots,\Gamma_c$, and $\mathbf{e}_1^{0,k+1}$ is also cascaded by $\mathbf{e}_{1,\tau}^{0,k+1}$ in the same way, we have
\begin{equation}\label{e10k+1=Me1k+1}
\mathbf{e}_{1}^{0,k+1}=\mathcal{T}_{\mathbf{v}}(\mathbf{e}_{1}^{k+1}).
\end{equation}

Next, we verify that $\mathbf{e}_2^{0,k+1}=\rm{R}_{\mathbf{v}}(\mathbf{e}_2^{k+1})$. According to \eqref{e2_update_parallel}, we denote $e^{k+1}_{2,i}=\prod_{[0,1]}(\bar{e}^{k+1}_{2,i})$, and $\bar{e}^{k+1}_{2,i}=\left(v_i^{k+1}+\frac{y_{2,i}^{k}}{\mu}\right)$. When $v_i=1$, we have
\begin{equation}\label{e2i0k+1=1-e2ik+1-detail}
 \bar{e}^{0,k+1}_{2,i}=v_i^{0,k+1}+\frac{y_{2,i}^{0,k}}{\mu}=1-\left(v_i^{k+1}+\frac{y_{2,i}^{k}}{\mu} \right),
 \end{equation}
 where the second equality holds since $v_i^{0,k+1}={\rm{R}}_{\mathbf{v}}(v_i^{k+1})=1-v_i^{k+1}$ and $y_{2,i}^{0,k}=\mathcal{K}_{\mathbf{v}}(y_{2,i}^{k})=-y_{2,i}^{k}$ when $v_i=1$. Thus from equation \eqref{e2i0k+1=1-e2ik+1-detail} we have $\bar{e}^{0,k+1}_{2,i}=1-\bar{e}^{k+1}_{2,i}$ when $v_i=1$. Moreover, since $\prod_{[0,1]}(\bar{e}^{0,k+1}_{2,i})=1-\prod_{[0,1]}(\bar{e}^{k+1}_{2,i})$, we have
 \begin{equation}\label{e2i0k+1=1-e2ik+1}
 e^{0,k+1}_{2,i}=1-e^{k+1}_{2,i}.
 \end{equation}

 For the other case when $v_i=0$ is transmitted over the channel, there exist
 \begin{equation}
 \bar{e}^{0,k+1}_{2,i}=v_i^{0,k+1}+\frac{y_{2,i}^{0,k}}{\mu}=v_i^{k+1}+\frac{y_{2,i}^{k}}{\mu} ,
 \end{equation}
where the second equality holds since $v_i^{0,k+1}={\rm{R}}_{\mathbf{v}}(v_i^{k+1})=v_i^{k+1}$ and $y_{2,i}^{0,k}=\mathcal{K}_{\mathbf{v}}(y_{2,i}^{k})=y_{2,i}^{k}$ when $v_i=0$. With a similar derivation of \eqref{e2i0k+1=1-e2ik+1}, when $v_i=0$, we have
\begin{equation}\label{e2i0k+1=e2ik+1}
 e^{0,k+1}_{2,i}=e^{k+1}_{2,i}.
 \end{equation}
 Combine \eqref{e2i0k+1=1-e2ik+1} and \eqref{e2i0k+1=e2ik+1}, we get
 \begin{equation}
 \mathbf{e}^{0,k+1}_{2}={\rm{R}}_{\mathbf{v}}(\mathbf{e}^{k+1}_{2}).
 \end{equation}

 Next, we verify that $\mathbf{y}_1^{0,k+1}=\mathcal{T}_{\mathbf{v}}(\mathbf{y}_1^{k+1})$. We first consider the relationship between $\mathbf{y}_{1,\tau}^{k+1}$ and $\mathbf{y}_{1,\tau}^{0,k+1}$. Since $\mathbf{e}_{1,\tau}^{0,k+1}=\mathcal{M}_{\mathbf{v}_{\tau}}(\mathbf{e}_{1,\tau}^{k+1})$ based on \eqref{e1tau0k+1=Me1tauk+1}, there are $\mathbf{e}_{1,\tau}^{0,k+1}=[{e}_{1,\tau_2}^{k+1},{e}_{1,\tau_1}^{k+1},{e}_{1,\tau_4}^{k+1},{e}_{1,\tau_3}^{k+1}]^T$ when $\mathbf{v}_\tau=[1,1,0]^T$. Thus, when $\mathbf{v}_\tau=[1,1,0]^T$, from \eqref{subproblems-c} and \eqref{v0k+1=Rvk+1} we have
\begin{equation}
\begin{aligned}
\mathbf{y}_{1,\tau}^{0,{k+1}} &=\mathbf{y}_{1,\tau}^{0,{k}}+\mu\left(\mathbf{T v}_{\tau}^{0,{k+1}}+\mathbf{e}_{1,\tau}^{0,{k+1}}-\mathbf{w}\right) \\
&=\left[\begin{array}{c}
y_{1,\tau_{2}}^{k} \\
y_{1,\tau_{1}}^{k} \\
y_{1,\tau_{4}}^{k} \\
y_{1,\tau_{3}}^{k}
\end{array}\right]+\mu\left(\mathbf{T}\left[\begin{array}{c}
1-v_{\tau_{1}}^{k+1} \\
1-v_{\tau_{2}}^{k+1} \\
v_{\tau_{3}}^{k+1}
\end{array}\right]+\left[\begin{array}{c}
e_{1,\tau_{2}}^{k+1} \\
e_{1,\tau_{1}}^{k+1} \\
e_{1,\tau_{4}}^{k+1} \\
e_{1,\tau_{3}}^{k+1}
\end{array}\right]-\mathbf{w}\right) \\
&\left.=\left[\begin{array}{l}
y_{1,\tau_{2}}^{k}+\mu\left(-v_{\tau_{1}}^{k+1}+v_{\tau_{2}}^{k+1}-v_{\tau_{3}}^{k+1}+e_{1,\tau_{2}}^{k+1}\right) \\
y_{1,\tau_{1}}^{k}+\mu\left(v_{\tau_{1}}^{k+1}-v_{\tau_{2}}^{k+1}-v_{\tau_{3}}^{k+1}+e_{1,\tau_{1}}^{k+1}\right) \\
y_{1,\tau_{4}}^{k}+\mu\left(v_{\tau_{1}}^{k+1}+v_{\tau_{2}}^{k+1}+v_{\tau_{3}}^{k+1}+e_{1,\tau_{4}}^{k+1}-2\right) \\
y_{1,\tau_{3}}^{k}+\mu\left(-v_{\tau_{1}}^{k+1}-v_{\tau_{2}}^{k+1}+v_{\tau_{3}}^{k+1}\right)+e_{1,\tau_{3}}^{k+1}
\end{array}\right)\right].
\end{aligned}
\end{equation}
Since $\mathbf{w}=[0,0,0,2]^T$, $\mathbf{y}_{1,\tau}^{0,k+1}$ can be further derived as
\begin{equation}
\begin{aligned}
\mathbf{y}_{1,\tau}^{0^{k+1}} &=\left[\begin{array}{c}
y_{1,\tau_{2}}^{k}+\mu\left(\left(-v_{\tau_{1}}^{k+1}+v_{\tau_{2}}^{k+1}-v_{\tau_{3}}^{k+1}\right)+e_{1,\tau_{2}}^{k+1}-w_{2}\right) \\
y_{1,\tau_{1}}^{k}+\mu\left(\left(v_{\tau_{1}}^{k+1}-v_{\tau_{2}}^{k+1}-v_{\tau_{3}}^{k+1}\right)+e_{1,\tau_{1}}^{k+1}-w_{1}\right) \\
y_{1,\tau_{4}}^{k}+\mu\left(\left(v_{\tau_{1}}^{k+1}+v_{\tau_{2}}^{k+1}+v_{\tau_{3}}^{k+1}\right)+e_{1,\tau_{4}}^{k+1}-w_{4}\right) \\
y_{1,\tau_{3}}^{k}+\mu\left(\left(-v_{\tau_{1}}^{k+1}-v_{\tau_{2}}^{k+1}+v_{\tau_{3}}^{k+1}\right)+e_{1,\tau_{3}}^{k+1}-w_{3}\right)
\end{array}\right] \\
&=\left[y_{1,\tau_{2}}^{k+1} , y_{1,\tau_{1}}^{k+1}, y_{1,\tau_{4}}^{k+1} ,y_{1,\tau_{3}}^{k+1}\right]^T,
\end{aligned}
\end{equation}
i.e.,
\begin{equation}\label{y1tau0k+1=My1tauk+1}
\mathbf{y}_{1,\tau}^{0,k+1}=\mathcal{M}_{\mathbf{v}_\tau}(\mathbf{y}_{1,\tau}^{k+1}),
\end{equation}
when $\mathbf{v}_\tau=[1,1,0]^T$.

With similar derivation, we can see \eqref{y1tau0k+1=My1tauk+1} also holds when $\mathbf{v}_\tau=[1,0,1]^T, [0,1,1]^T, [0,0,0]^T$. Since $\mathbf{y}_{1}^{k+1}$ and $\mathbf{y}_{1}^{0,k+1}$ are cascaded by $\mathbf{y}_{1,\tau}^{k+1}$ and $\mathbf{y}_{1,\tau}^{0,k+1}$ respectively, $\tau=1,\cdots,\Gamma_c$, we can get
\begin{equation}\label{y10k+1=My1k+1}
\mathbf{y}_{1}^{0,k+1}=\mathcal{M}_{\mathbf{v}}(\mathbf{y}_{1}^{k+1}).
\end{equation}

Finally, it remains to verify that $\mathbf{y}_2^{0,k+1}=\mathcal{T}_{\mathbf{v}}(\mathbf{y}_2^{k+1})$. According to $\eqref{subproblems-e}$, when $v_i=1$, we have
\begin{equation}\label{y2i0k+1=-y2ik+1-detail}
\begin{aligned}
y_{2,i}^{0,k+1}&=y_{2,i}^{0,k}+\mu(v_i^{0,k+1}-e_{2,i}^{0,k+1})\\
&=-(y_{2,i}^{k}+\mu(v_i^{k+1}-e_{2,i}^{k+1})),
\end{aligned}
\end{equation}
where the second equality holds since $y_{2,i}^{0,k}=\mathcal{K}_{\mathbf{v}}(y_{2,i}^{k})=-y_{2,i}^{k}$, $v_i^{0,k+1}={\rm{R}}_{\mathbf{v}}(v_i^{k+1})=1-v_i^{k+1}$ and $e_{2,i}^{0,k+1}={\rm{R}}_{\mathbf{v}}(e_{2,i}^{k+1})=1-e_{2,i}^{k+1}$ when $v_i=1$. Thus from equation \eqref{y2i0k+1=-y2ik+1-detail}, when $v_i=1$, we have
\begin{equation} \label{y2i0k+1=-y2ik+1}
y_{2,i}^{0,k+1}=-y_{2,i}^{k+1}.
\end{equation}

For the other case when $v_i=0$ is transmitted over the channel, there exist
\begin{equation}\label{y2i0k+1=y2ik+1}
\begin{aligned}
y_{2,i}^{0,k+1}&=y_{2,i}^{0,k}+\mu(v_i^{0,k+1}-e_{2,i}^{0,k+1})\\
&=y_{2,i}^{k}+\mu(v_i^{k+1}-e_{2,i}^{k+1})\\
&=y_{2,i}^{k+1},
\end{aligned}
\end{equation}
where the second equality holds since $y_{2,i}^{0,k}=\mathcal{K}_{\mathbf{v}}(y_{2,i}^{k})=y_{2,i}^{k}$, $v_i^{0,k+1}={\rm{R}}_{\mathbf{v}}(v_i^{k+1})=v_i^{k+1}$ and $e_{2,i}^{0,k+1}={\rm{R}}_{\mathbf{v}}(e_{2,i}^{k+1})=e_{2,i}^{k+1}$ when $v_i=0$.

Combine \eqref{y2i0k+1=-y2ik+1} and \eqref{y2i0k+1=y2ik+1}, we get
\begin{equation}
\mathbf{y}_{2}^{0,k+1}=\mathcal{K}_{\mathbf{v}}(\mathbf{y}_{2}^{k+1}),
\end{equation}
This ends the proof of Lemma \ref{lemma-symmetry-v0k-v0k+1}.
\end{proof}


%





\ifCLASSOPTIONcaptionsoff
  \newpage
\fi

\end{document}